\newtheorem{theorem}{Theorem}
\newcommand{\be}{\nopagebreak[3]\begin{equation}}
	\newcommand{\ee}{\end{equation}}
\newcommand{\bfig}{\nopagebreak[3]\begin{figure}}
	\newcommand{\efig}{\end{figure}}
\newcommand{\bea}{\nopagebreak[3]\begin{eqnarray}}
	\newcommand{\eea}{\end{eqnarray}}
\newcommand{\bmult}{\nopagebreak[3]\begin{multline}}
	\newcommand{\emult}{\end{multline}}
\begin{document}
	
	\title{Does inflation squeeze cosmological perturbations?}
	
	\author{Ivan Agullo}
	\email{agullo@lsu.edu}
	\affiliation{Department of Physics and Astronomy, Louisiana State University, Baton Rouge, LA 70803, U.S.A.}
	\author{B\'eatrice Bonga}
	\email{bbonga@science.ru.nl}
	\author{Patricia Ribes Metidieri}
	\email{patricia.ribesmetidieri@ru.nl}
	\affiliation{Institute for Mathematics, Astrophysics and Particle Physics,
		Radboud University, 6525 AJ Nijmegen, The Netherlands}
	
	\begin{abstract}

		There seems to exist agreement about the fact that  inflation squeezes the quantum state of cosmological perturbations and entangles modes with wavenumbers $\vec k$ and $-\vec k$. Paradoxically, this result has been used to justify both the classicality as well as  the quantumness of the primordial  perturbations at the end of inflation.  We reexamine this question and point out that the
		definition of two-mode squeezing of the  modes $\vec k$ and $-\vec k$ used in  previous work rests on choices that are only justified for systems with  time-independent Hamiltonians and finitely many degrees of freedom. We argue that for quantum fields propagating on generic time-dependent  Friedmann-Lema\^itre-Robertson-Walker backgrounds, the notion of squeezed states is subject to ambiguities, which go hand in hand with the ambiguity in the definition of particles. In other words, we argue that the question ``does the cosmic expansion squeeze and entangle modes with wavenumbers $\vec k$ and $-\vec k$?'' contains the same ambiguity as the question ``does the cosmic expansion create particles?''. 
		When additional symmetries are present, like in  the (quasi) de Sitter-like spacetimes used in inflationary models, one can resolve the ambiguities, and we find that the answer to the question in the title turns out to be in the negative.  We further argue that this fact does not make the state of cosmological perturbations any less quantum, at least when deviations from Gaussianity can be neglected. 
		
	\end{abstract}

	\maketitle
	
	
	\section{Introduction\label{intro}}
	
	The paradigm of cosmic inflation gave rise to an unforeseen and profound lesson: the density perturbations in the universe 
	may have a {\em quantum origin} \cite{Mukhanov:1981xt,Hawking:1982cz,Guth:1982ec,Starobinsky:1982ee,Bardeen:1983qw}. In inflation, density perturbations originate from the quantum fluctuations of the vacuum itself, which were amplified and stretched to cosmological distances by the accelerated cosmic expansion. 
	This claim is of indisputable conceptual depth and beauty, and many efforts have been dedicated to investigate it \cite{Grishchuk:1990bj,Albrecht:1992kf,Polarski:1995jg,Lesgourgues:1996jc,Kiefer:1998qe,Kiefer:1998pb,Kiefer:2008ku,sudarsky_shortcomings_2011,Martin:2015qta,ack,Brahma:2021mng,Green:2020whw}. In this paper, we further investigate this fundamental question: is there a way to confirm or refute the genuine quantum origin of the cosmic perturbations? 
	
	
	At present, cosmologists use purely classical tools to analyze the collected  data, and there is no evidence so far that such tools are insufficient to understand observations. More concretely, in contrasting the predictions of inflation with observations, one replaces the quantum probability distribution for the primordial perturbations ---computed using quantum field theory--- with a classical stochastic function with identical statistical moments. In doing so, one automatically eliminates any genuinely quantum trace.  This situation has motivated different researchers to investigate two natural questions: (i) If perturbations have a quantum origin, how can we understand the apparent classicality of our universe? (ii) Is there any observable in the cosmic microwave background (CMB) which could prove  that a classical treatment is insufficient?
	
	Paradoxically, a single mechanism has been in the spotlight of the search for an answer to these two
	questions: dynamical generation of two-mode squeezing during inflation between  perturbations with wavenumbers $\vec k$ and $-\vec k$. On the one hand, it has been argued that  this squeezing mitigates many quantum aspects of the perturbations \cite{Grishchuk:1990bj,Albrecht:1992kf,Lesgourgues:1996jc,Kiefer:1998pb,Kiefer:1998qe,Kiefer:2008ku,Polarski:1995jg} (see \cite{Hsiang:2021kgh} for a recent criticism to these arguments) while, on the other hand, it has also been argued that squeezing comes together with a  generation of quantum entanglement between the modes $\vec k$ and $-\vec k$, which makes the state of perturbations at the end of inflation very quantum \cite{Martin:2015qta}. \emph{Single}-mode squeezing and its relation to the quantumness of the state of perturbations during inflation has also been discussed in \cite{ack}. 
	
	The aim of this paper is to take a critical view on the definition of two-mode (and also single-mode) squeezing for quantum fields on Friedmann-Lema\^itre-Robertson-Walker (FLRW) spacetimes.
	Following earlier work, we will focus on Gaussian states, since observations have not revealed any sign of primordial non-Gaussianity, despite important efforts \cite{refId0} (see \cite{Green:2020whw,Shandera:2017qkg,Brahma:2021mng} for discussions of non-Gaussian states). 
	
	Our main goal is to point out an ambiguity underlying most discussions on the generation of squeezing in Fourier space by the cosmic expansion. In talking about squeezing and entanglement between degrees of freedom  with the wavenumbers  $\vec k$ and $-\vec k$, one needs to construct canonically conjugated pairs of Hermitian operators associated with these degrees of freedom. We discuss the ambiguities one finds in this construction when the underlying spacetime is homogeneous but  time-dependent, and argue they are the same ambiguities one finds in the definition of vacuum or particle. 
	In a generic FLRW, there is no preferred choice, and therefore the answer to this question does not carry any invariant physical meaning.   We also argue that the answer is not of direct relevance to understand observations ---which are carried out in {\em real} space--- as one would expect given the inherent ambiguity.

	Entanglement in real space is ubiquitous in quantum field theory \cite{1985PhLA..110..257S,2018}, even for the vacuum in Minkowski spacetime, and it is independent of any particle interpretation. But current cosmological data seem insufficient to reveal any trace of this entanglement, due to the difficulty in observing non-commuting observables associated with the primordial perturbations. It is for this reason that a classical stochastic state suffices to completely account for observations.   
	Along the way, we will use simple examples to illustrate the main messages of this paper, using a set of finitely many harmonics oscillators and a linear scalar field in FLRW spacetimes.

	
	This paper is organized as follows. We begin in section \ref{sec:findim} with a brief review of squeezing and entanglement for quantum systems with finitely many degrees of freedom and quadratic Hamiltonians. We summarize the relation between squeezing, entanglement, and ``quantumness'' of Gaussian states. The case of a time-dependent Hamiltonian serves to illustrate several messages which will be  important for the study of squeezing in inflation. Although the lessons extracted from this simple analysis are not new, they are not made explicit in many treatments. In section~\ref{sec:scalar}, we extend the discussion to field theory by considering a scalar field in spatially flat FLRW spacetimes. We discuss the additional subtleties that the existence of infinitely many degrees of freedom introduces. To illustrate the role of symmetries in the dynamical  generation of squeezing, we consider the example of the Poincar\'e patch of de Sitter spacetime, and use it to compare with the strategy followed in earlier work. This example provides lessons of direct applicability for cosmological perturbations in inflation, which are discussed in section~\ref{sec:cosmpert}. We collect our results and put them in a broader perspective in section \ref{sec:concl}. Appendix~\ref{app:single-mode-sqz} contains a discussion of single-mode squeezing during inflation, appendix~\ref{app:qq} summarizes other measures of ``quantumness'' commonly use in the literature of quantum optics, such as the $P$-function, and appendix~\ref{app:BD} provides a derivation of the Bunch-Davies vacuum in the Poincar\'e patch of de Sitter spacetime and its properties in the Schr\"odinger evolution picture, many of which are used in the main text. Throughout this paper, we use units in which $\hbar=c=1$. 
	

	\section{Squeezing, entanglement and quantumness of Gaussian states of linear finite-dimensional systems\label{sec:findim}}
	
	The goal of this section is to emphasize three messages concerning  finite-dimensional bosonic systems: 
	(i) The notion of squeezing requires a quantum state {\em and} a pair of non-commuting operators. 
	(ii) For any  Gaussian state there always exists a  basis of canonically conjugated pairs of operators for which the state is not squeezed, and another basis for which the state has arbitrarily large squeezing. The same applies to entanglement: one can always find bi-partitions of the system for which the entanglement between the two sub-systems is zero, or as large as desired. Hence, the sentence ``$\hat \rho$ is a squeezed or an entangled state'' is empty, unless one has in mind a preferred set of canonically conjugated pairs or bi-partition. 
	(iii) If the Hamiltonian is time-dependent, the preferred canonically conjugated pairs and bi-partitions at the initial and final instants are generically different. 
	Therefore, the question ``does evolution squeeze or entangle the state $\hat \rho$?'' brings an additional ambiguity, related to the choice of quadrature-pairs and bi-partitions at the initial {\em and} final times. 
	
	In the remainder of this section, we justify these statements and illustrate them with simple examples. Most of this material is known (see for instance \cite{Zanardi:2004zz}), and our goal is to simply emphasize  aspects that are frequently unnoticed, and that are relevant  for the questions investigated in  this article.
	The reader familiar with these topics can jump directly to the next section. 
	
	\subsection{Linear finite-dimensional systems: basic notation}
	
	We begin by introducing standard terminology for the quantization of a system with a $2N$-dimensional phase space, for finite $N$. We will focus on linear systems for which the classical phase space $\Gamma$ is a vector space, and  one can choose global canonical coordinates $x_I,p_I$, $I=1,\ldots, N$ in $\Gamma$ such that the  Poisson brackets are $\{x_I,x_J\}=\{p_I,p_J\}=0$ and $\{x_I,p_J\}=\delta_{IJ}$. This can be expressed more compactly by defining the column vector $r^i=(x_1,p_1,\ldots,x_N,p_N)^{\top}$ ---we use lower case letters for indices in phase space, $i=1,\ldots, 2N$---  in terms of which  all  Poisson brackets read $\{r^i,r^j\}=\Omega^{ij}$, where $\Omega^{ij}=\oplus_N \begin{pmatrix} 0 & 1 \\-1 & 0\end{pmatrix}$ is the (inverse of) the symplectic structure. 
	
	In the quantum theory, the canonical coordinates $r^i$ are promoted to operators satisfying commutation relations $[\hat r^i,\hat r^j]=i\, \Omega^{ij}$. Together with the identity operator $\hat{\mathbb{I}}$,  $\hat r^i$ can be used to generate all other polynomial operators by taking linear combinations of their products. Of particular relevance for our discussion are {\em linear} observables, made of simple linear combinations of  $\hat r^i$:
	\be 
	\hat O_{\vec \alpha}\equiv \vec \alpha\cdot \hat{\vec r}=\alpha_i \, \hat r^i\, , \qquad \text{with} \quad \vec \alpha\in \mathbb{R}^{2N} 
	\ee
	(sum over repeated indices is understood). It is convenient to identify $\vec \alpha$ with elements of $\Gamma^*$, the dual of the phase space $\Gamma$, since in that way all operators $\hat O_{\vec \alpha}$ have dimensions of action. 
	Given two linear observables, $ \hat O_{\vec \alpha}$ and  $\hat O_{\vec \beta}$, their commutator is simply $[\hat O_{\vec \alpha},\hat O_{\vec \beta}]=i\, \alpha_i\beta_j\Omega^{ij}$, or in matrix notation $[\hat O_{\vec \alpha},\hat O_{\vec \beta}]=i\, \vec \alpha^{\top} \cdot \Omega \cdot \vec \beta$ (that is, $i$ times the symplectic product of $\vec \alpha$ and $\vec \beta$). 
	
	We  say that two such operators $ \hat O_{\vec \alpha}$ and  $\hat O_{\vec \beta}$ form a quadrature-pair if $[\hat O_{\vec \alpha},\hat O_{\vec \beta}]=i$. Furthermore, $N$ mutually commuting quadrature-pairs 
	will be said to form a Darboux basis. For instance, the canonical operators $\hat r^i$ ---properly normalized, so they all have dimensions of action while still satisfying the same commutation relations--- form a Darboux basis of quadrature-pairs.  Given such a basis, and given a real $2N\times 2N$ matrix $S^i_{\, j}$ that leaves the symplectic structure invariant, i.e.,  satisfying $S^{\top}\cdot \Omega\cdot S=\Omega$, the linear operators $ \hat{\vec r}\, '=S\cdot \hat{\vec r}$  also form a Darboux basis of quadrature-pairs. The matrix $S$ implements a linear canonical transformation, and the set of all such matrices forms the symplectic group $\rm{Sp}(2N,\mathbb{R})$. 
	
	\subsection{Quadrature squeezing\label{quadsqz}}
	We call  a quantum state (Gaussian or not, pure or mixed) $\hat \rho$
	squeezed relative to the quadrature-pair  $(\hat O_{\vec \alpha},\hat O_{\vec \beta})$  
	when either of the dispersions  
	$ \Delta O^2_{\vec \alpha}$ or $\Delta O^2_{\vec \beta}$  satisfy
	\be  \Delta O^2_{\vec \alpha}< \frac{1}{2}\,   \qquad {\rm or} \qquad  \Delta O^2_{\vec \beta}< \frac{1}{2} \,, \ee
	%
	where $\Delta O^2\equiv {\rm Tr}[\hat \rho \, \hat O^2]-{\rm Tr}[\hat \rho \,\hat O]^2$. (Of course,  Heisenberg's principle implies that the product $\Delta O^2_{\vec \alpha}\, \Delta O^2_{\vec \beta}$ is never less than $\frac{1}{4}$.) 
	
	Note that in the definition of squeezing it is pivotal that both $\hat O_{\vec \alpha}$ and  $\hat O_{\vec \beta}$  have the same dimensions, otherwise there is no unambiguous way of splitting Heisenberg's uncertainty lower bound between them. It is also important to emphasize that the notion of squeezing requires both a state \emph{and} a pair of quadratures. It is meaningless to simply say that a state is squeezed. 
	
	Given two commuting quadrature-pairs, ($\hat O_{\vec \alpha_1}, \,  \hat O_{\vec \beta_1})$ and ($\hat O_{\vec \alpha_2}, \,  \hat O_{\vec \beta_2})$, each describing a physical degree of freedom (or mode) of the system, one says the state $\hat \rho$ is a  {\em two-mode} squeezed state relative to these  pairs when it is squeezed for  any non-trivial  linear combination of the two pairs. 
	%
	%

	\subsection{Gaussian states and squeezing}

	We focus now on Gaussian states. This is the family of states most discussions of squeezing and classicality in cosmology have focused on, motivated by the absence of primordial non-Gaussianity in the CMB (see, however, \cite{Green:2020whw,Shandera:2017qkg,Brahma:2021mng}). 
	The proofs omitted in this section can be found, for instance, in \cite{serafini2017quantum}. 
	The simplicity of quantum Gaussian states resides in the fact that all their quantum moments $\langle \hat r^{i_1}\cdots  \hat r^{i_{n}}\rangle$ are completely determined from the first  and second moments, $\langle \hat r^{i}\rangle$ and $\langle \hat r^{i}\hat r^{j}\rangle$, respectively. We will denote the first moments by $\mu^i=\langle \hat r^{i}\rangle$. The non-trivial information in the second moments is more cleanly encapsulated in their symmetrized version $\sigma^{ij}=\langle \{(\hat r^i-\mu^i), (\hat r^j-\mu^j)\}\rangle$, where the curly brackets indicate the symmetric anti-commutator.\footnote{The anti-symmetric part of the second moments is proportional to the commutator $[\hat r^i, \hat r^j]=i\, \Omega^{ij}$, whose expectation value does not carry any information about the state.} The matrix $\sigma^{ij}$ is called the covariance matrix of the state, and it carries information about the dispersion of all linear operators $\hat O_{\vec \alpha}\equiv \vec \alpha\cdot\hat{\vec r}$:
	\be \Delta O^2_{\vec \alpha} =\frac{1}{2} \, \vec \alpha^{\top}  \cdot \sigma\cdot \vec \alpha=\frac{1}{2}\alpha_i\alpha_j\sigma^{ij}\, .\ee
	Therefore, all physical predictions relative to a Gaussian state, pure or mixed, can be obtained from the $2N$-dimensional vector $\mu^i$, and the $2N\times2N$-symmetric matrix $\sigma^{ij}$. 
	
	An important property of the covariance matrix $\sigma$ is the following. First, it must satisfy that  $\sigma+i\, \Omega$ is a non-negative matrix (i.e., all eigenvalues must be non-negative). This condition is tantamount to the positivity of the density matrix and it encodes the quantum uncertainty inequalities,  in particular Heisenberg's principle. This further implies that  $\sigma$ is a positive-definite matrix.
	Williamson's theorem then guarantees that $\sigma$ can be ``symplecticly diagonalized'', i.e., that there exists a symplectic transformation $B \in \rm{Sp}(2N,\mathbb{R})$ such that  $B\cdot \sigma \cdot B^{\top}=\rm{diag}(\nu_1,\nu_1,\cdots, \nu_N,\nu_N)$. The $N$ positive real numbers $\nu_I$ are called the symplectic eigenvalues of $\sigma$, and they encode the invariant information of the covariance matrix.\footnote{Invariant in the sense of the symplectic group in phase space. Note that, in contrast, the ordinary eigenvalues of $\sigma$ are not invariant under symplectic transformations;  
		they depend on both, the quantum state and the Darboux basis chosen to write  $\sigma$.\label{fn1}} 
	
	One can further prove that all symplectic eigenvalues have to be larger than one, $\nu_I\geq 1$---and they are all equal to one if and only if the state is pure. 
	The symplectic eigenvalues $\nu_I$, by construction, are equal to twice the dispersion of the operators in the Darboux basis in which the covariance matrix takes the diagonal form, that is, $\Delta X_I^2=\Delta P_I^2=\frac{\nu_I}{2}$,  $I=1,\cdots, N$. 
	Hence, none of the operators in this basis are squeezed. This proves that for any Gaussian state there always exists a Darboux basis of canonical operators for which the state is manifestly not squeezed. 
	
	On the contrary, one can always  find quadrature-pairs with an arbitrary large degree of squeezing. A trivial example is the following: if  ($\hat O_{\vec \alpha}$, $\hat O_{\vec \beta}$) is a quadrature-pair, the transformation $\hat O_{\vec \alpha}\to e^{-\zeta}\, O_{\vec \alpha} $ and $\hat O_{\vec \alpha}\to e^{\zeta}\, O_{\vec \alpha} $, parameterized by a real number $\zeta$, is a symplectic transformation which brings us to a  new quadrature-pair with  dispersions  $e^{-2\, \zeta}\,  \langle \Delta \hat O_{\vec \alpha}^2\rangle$ and $e^{2\, \zeta} \, \langle \Delta \hat O_{\vec \beta}^2\rangle$, respectively. Hence, tuning $\zeta$  one can obtain quadratures with as much or as little squeezing as desired. Therefore, the statement ``$\hat \rho$ is a  squeezed state'' is empty---unless one has in mind some preferred quadratures.

	In practical cases, one can use  physical arguments to choose some preferred quadratures among the observables that are accessible in an experiment. For instance, for a time-independent harmonic oscillator of mass $m$ and frequency $w$, $ \hat X\equiv (m\, w)^{1/2}\, \hat x$ and $\hat P\equiv  (m\, w)^{-1/2}\, \hat p$ is the most physically relevant quadrature-pair, 
	and it is natural to associate squeezing with this pair. 
	%
	%
	For more general but time-independent systems with quadratic Hamiltonians, the normal modes provide a physically preferred Darboux basis of quadrature-pairs to which one can naturally refer when talking about squeezing. The process is as follows. The ground state of the Hamiltonian defines a set of annihilation operators $\hat A_I$ 
	from which we can write the normal modes of the system as $\hat X_I=\frac{1}{\sqrt{2}} \, (\hat A_I+\hat A_I^{\dagger})$ and $\hat P_I=\frac{-i}{\sqrt{2}} \, (\hat A_I-\hat A_I^{\dagger})$. It is then straightforward to prove that the ground state, and in fact all eigenstates of $\hat A_I$ (i.e., coherent states), are not squeezed relative to the basis of quadrature-pairs made of the normal modes $(\hat X_I,\hat P_I)$. 
	The reason we repeat this well-known fact  is to emphasize that  this familiar notion of squeezed  states rests on the existence of a time-independent Hamiltonian, which has a ground state and an associated Darboux basis of quadrature-pairs made of normal modes. These elements  will not be available for fields in time-dependent geometries, for which the difference between squeezed and coherent states dilutes, and become subject to certain choices.  
	

	\subsection{Gaussian states and entanglement}
	
	The conclusions of the previous subsection also apply to entanglement between bi-partitions of any $N$-mode systems and Gaussian states. Entanglement is  not an invariant notion that can be attributed only to the quantum state, since it obviously depends on the  bi-partition chosen \cite{Zanardi:2004zz}. The arguments summarized in the previous section ---in particular, the fact that the covariance matrix can always be brought to diagonal form--- automatically imply that, given an arbitrary Gaussian state, pure or mixed, there always exist a Darboux basis of quadrature-pairs, $\hat X_1, \hat P_1,\cdots, \hat X_N,\hat P_N$, for which there is no entanglement among any bi-partition of these pairs, and the state is in a manifestly non-entangled and uncorrelated  form. On the contrary, one can always build suitable linear combinations of these operators to find bi-partitions for which the entanglement is arbitrarily high.
	For instance, the subsystems defined by the canonical pairs $(\hat{ X}'_1\equiv \hat X_1,\, \hat{ P}'_1\equiv\hat P_1+\hat X_2) $ and $(\hat{ X}'_2\equiv\hat X_2,\, \hat{P}'_2\equiv\hat P_2+\hat X_1)$ can be entangled for states for which the  subsystems $(\hat X_1,\hat P_1) $ and $(\hat X_2,\,\hat P_2)$  are not, as it can be checked for two harmonic oscillators in the vacuum state. Hence, the sentence ``the Gaussian state $\hat \rho $ is an entangled state" is incomplete, unless one specifies a bi-partition. Consequently, little or much entanglement for a given bi-partition does not make the state more or less quantum in any invariant manner. Entanglement is intrinsic to all Gaussian states of any multimode system. The difference is only that some Gaussian states contain entanglement among the most physically relevant bi-partitions, and it is common practice to reserve for them the name ``entangled states'' (see Appendix~\ref{app:qq} for more discussion).

	\subsection{Dynamics, squeezing and entanglement\label{dyn}}
	
	As mentioned before, we restrict here to Hamiltonians which are quadratic polynomials of the  canonical variables, $\hat H= \frac{1}{2} \hat r^i\,  h_{ij} \, \hat r^j+c$, where $h_{ij}$ is a symmetric, positive definite, possibly time-dependent matrix, and $c$ a constant.\footnote{Any terms linear in $\hat r^i$ can be removed by a re-definition of $\hat r^i$.} This family of Hamiltonians is the analog for a finite-dimensional system of the Hamiltonian for a free field theory discussed in the next section. These Hamiltonians preserve the Gaussianity of quantum states in the sense that the evolution of a Gaussian state with mean $\vec \mu$ and covariance matrix $\sigma$ from time $t_0$ to $t_1$,  is another Gaussian state with mean and covariance matrix  $E\cdot \vec \mu$ and  $E\cdot \sigma\cdot E^{\top}$, respectively, where $E^i_{\, j}$ is the time evolution matrix, determined from the Hamiltonian through $E={\rm T} \, \exp{\int^{t_1}_{t_0} \Omega\cdot \,h(t')\, dt'}$, with $\rm T$ indicating the standard time-ordered product. Note that  the $2N\times2N$  matrix $E$ provides the evolution of the canonical variables in the Heisenberg picture: $\hat r^i(t_1)=E^{i}_{\, j}\, \hat r^j(t_0)$. It is also exactly the same matrix that implements the Hamiltonian flow in the classical theory, in particular,  $ r^i(t)=E^{i}_{\, j}\,  r^j(t_0)$ is a solution to the classical Hamilton's equations. That the classical evolution completely determines the quantum dynamics, is a peculiarity of quadratic Hamiltonians---it is not true for more complicated Hamiltonians due to factor ordering ambiguities. Hence, evolving  Gaussian states under quadratic Hamiltonians is extremely simple: we can forget about (infinite-dimensional) density matrices,  unitary operations or Schr\"odinger's equation;  we only need to evolve its first and second moments  $(\vec \mu,\sigma)$  by multiplying them with the classical evolution matrix as indicated above. The evolution matrix $E^{i}_{\, j }$ is always an element of  the symplectic group. \\
	
	Given this background, we are now ready to study the evolution of squeezing and entanglement of a quantum state. We are interested in the analog of the question which we want to answer for scalar fields and  cosmological perturbations in sections \ref{sec:scalar} and \ref{sec:cosmpert}: if we choose a non-squeezed and non-entangled quantum state at time $t_{\rm in}$, is the time-evolved state squeezed and entangled at $t_{\rm out}>t_{\rm in}$? As emphatically discussed above, this question is unambiguous only if there are preferred quadratures at times $t_{\rm in}$ and $t_{\rm out}$. If the Hamiltonian is time-independent, the preferred set of Darboux quadrature-pairs is made of the normal modes of the system. Therefore, all we have to do to answer the question is to express the covariance matrix in this basis, and follow the time evolution of its components, $E\cdot \sigma\cdot E^{\top}$. From them, it is straightforward to compute squeezing and entanglement among different bipartitions of the preferred quadrature-pair basis.
	
	However, if the Hamiltonian does depend on time, the answer to the question requires more work, because {\em the preferred set of Darboux quadrature-pair basis may also change in time}. Therefore, from a physical standpoint, we will say that the state is not squeezed or entangled at the initial time  $t_{\rm in}$, when we find negative answers relative to the normal modes at $t_{\rm in}$. But at time $t_{\rm out}$,  what quadratures-pairs should we use to determine whether the evolved state is squeezed or entangled, the preferred quadrature-pairs at  $t_{\rm in}$ or at $t_{\rm out}$? The following simple example illustrates the difference between these two options, and helps us to understand the physical content of these two choices.\\

	{\bf Example: A time-dependent harmonic oscillator.} 
	Consider a single mode system with Hamiltonian $\hat H(t)=\frac{1}{2\, m}\, \hat p^2+\frac{1}{2} \, m\, w^2(t)\, \hat x^2$ , where $w(t)$ is time-independent in the past, then varies smoothly and monotonically, and finally becomes constant again. Let $t_{\rm in}$ ($t_{\rm out})$  be a time inside the initial (final) interval where $w(t)$ is constant, and let $w_{\rm in}$ and $w_{\rm out}$ be its initial and final values, respectively.

	At $t_{\rm in}$, the normal modes of the system are $\hat X_{\rm in}=(m\, w_{\rm in})^{1/2}\, \hat x$ and $\hat P_{\rm in}=(m\, w_{\rm in})^{-1/2}\, \hat p$. They define the annihilation  and  number operators,  $\hat A_{\rm in}=\frac{1}{\sqrt{2}}\, (\hat X_{\rm in}+i\, \hat P_{\rm in})$ and $\hat N_{\rm in}=\frac{1}{2} \, (\hat X_{\rm in}^2+\hat P_{\rm in}^2-\hat{\mathbb{I}})$, respectively. Let's assume the system starts at $t_{\rm in}$  in the ground state of the Hamiltonian $\hat H(t_{\rm in})$, which we will denote as $|{\rm in}\rangle$. This is a Gaussian state, with zero mean and covariance matrix  $\sigma_{\rm in}=\mathbb{I}_2$ equal to the identity when expressed in the Darboux basis  $\hat X_{\rm in}$,  $\hat P_{\rm in}$. Therefore, the initial state is not squeezed. We want to answer the question:  does evolution generate squeezing?
	
	To emphasize our point more clearly, let us consider two  situations: (a) We  assume that the change from $w_{\rm in}$ to $w_{\rm out}$ happens {\em adiabatically}, i.e., in a timescale $\tau$ much larger than any other natural timescale in the system, ideally  $\tau\to \infty$. (b) The change from $w_{\rm in}$ to $w_{\rm out}$ happens {\em instantaneously}. This corresponds to the limit $\tau\to 0$.
	
	In the adiabatic situation (a), the adiabatic theorem guarantees that the evolution of the state $|{\rm in}\rangle$ from $t_{\rm in}$ to $t_{\rm out}$ produces precisely the ground state of the Hamiltonian at time $t_{\rm out}$: $\hat U|{\rm in}\rangle~=~|{\rm out}\rangle$. 
	Is  $|{\rm out}\rangle$  a squeezed state? The variances of the quadrature-pair $\hat X_{\rm in}$ and $\hat P_{\rm in}$ are $ \Delta X^2_{{\rm in}} =\frac{1}{2}\,  \frac{\omega_{\rm in}}{\omega_{\rm out}}$ and $\Delta P_{{\rm in}}^2=\frac{1}{2}\,  \frac{\omega_{\rm out}}{\omega_{\rm in}}$. Therefore, if $\omega_{\rm in}\neq \omega_{\rm out}$ there is squeezing either in $\hat X_{\rm in}$ or $\hat P_{\rm in}$. Also, the expectation value of $\hat N_{\rm in}$ is different from zero at $t_{\rm out}$. 
	But these quantities do not have any natural meaning at time $t_{\rm out}$. An experimentalist entering the room at $t_{\rm out}$ will argue that the system is in the ground  state of the Hamiltonian, and that the dispersions of the normal modes at $t_{\rm out}$, namely $\hat X_{\rm out}=(m\, \omega_{\rm out})^{1/2}\, \hat x$ and $\hat P_{\rm out}=(m\, \omega_{\rm out})^{-1/2}\, \hat p$,  are both equal to $1/2$. Hence, there is neither generation of quanta nor squeezing according to this observer at time $t_{\rm out}$.
	
	In case (b), the state remains invariant, $\hat U|{\rm in}\rangle=|{\rm in}\rangle$, and therefore there is no squeezing in the initial quadrature-pair ($\hat X_{\rm in}$, $\hat P_{\rm in}$) at $t_{\rm out}$. However, the physical Hamiltonian has changed and an experimentalist at time $t_{\rm out}$  will say that the system is not in the ground state; the evolved state  is excited and squeezed with respect to the preferred quadratures at $t_{\rm out}$. \\

	These rather academic examples\footnote{
		For quantum fields during inflation, we will {\em not} assume the expansion of the universe is either adiabatic or instantaneous, but we will find that there is no generation of squeezing if one uses the quadratures singled out by the de Sitter symmetry.} reveal the importance of the choice of  quadratures in order to argue whether evolution has generated squeezing and quanta, precisely because these notions do not have an invariant meaning and are associated with observers or choices of creation and annihilation operators (a similar argument works for entanglement, although the discussion requires at least two oscillators).  
	Therefore, although mathematically one could decide to fix the quadratures once and for all and follow the evolution of their dispersions, such a strategy does not reproduce the quantities of natural interest for time-dependent systems. For them, the observables of physical interest evolve in time, and the physical characterization of a state as  squeezed or entangled must be adapted to the  evolution. This argument will be important  for fields in time-dependent spacetimes, for which additional ambiguities and mathematical subtleties arise due to the infinite number of degrees of freedom.

	\section{Dynamical generation of squeezing for a scalar field  in FLRW spacetimes \label{sec:scalar}}
	For pedagogical purposes, we consider first a real scalar field on FLRW spacetimes, since it is free of some additional complications involved in the definition of scalar and tensor curvature perturbations, which happen to be unessential for the discussion of the dynamical generation of squeezing and entanglement. In section \ref{sec:cosmpert}, we extend the discussion to include curvature perturbations.  
	The discussion of squeezing is clearer in the Schr\"odinger evolution picture \cite{Agullo:2015qqa}, and we will use it in this section---the translation to the Heisenberg picture is straightforward.\\ 
	
	

	We will work with a real scalar field, which in the classical theory satisfies  the Klein-Gordon equation
	\be 
	(\Box-m^2-\xi\, R)\, \Phi( \eta,\vec x)=0\, , \label{KG}
	\ee
	where $\Box$ is the d'Alembertian operator associated with the spacetime line element, which  in  FLRW spacetime reads $ds^2=a^2(\eta) \, (-d\eta^2+d\vec x^2)$, where $\eta$ represents conformal time, $a(\eta)$ is the scale factor, $R=6 a''/a^3$  the Ricci scalar, and $m$ and $\xi$  are real numbers representing the mass of the scalar field and its coupling to the background curvature, respectively. 
	
	The goal of this paper is to understand whether time evolution generates two-mode squeezing and entanglement between pairs of Fourier modes of the field with wavenumbers $\vec k$ and $-\vec k$. Alternatively, one can also investigate if evolution squeezes each mode individually; a discussion of  such single-mode squeezing is relegated to Appendix~\ref{app:single-mode-sqz}.  
	
	To investigate two-mode squeezing we can use the same tools as described in the previous sections for finite-dimensional systems, and apply them to the quadrature-pairs describing degrees of freedom of the fields associated with wavenumbers $\vec k$ and $-\vec k$. Hence, the  first question  we need to address is a simple one, but which contains some subtleties worth clarifying: how can we define  quadrature-pairs, or modes, associated with wavenumbers $\vec k$ and $-\vec k$ out of the field operator $\hat \Phi(\vec x)$ and its conjugate momentum $\hat \Pi(\vec x)$?  We describe two possible  strategies, of which only one turns out to be satisfactory.\\
	

	{\bf Strategy  1:} Use Fourier modes.  
	
	Let us pay attention to the Fourier components of $\hat \Phi(\vec x)$ and $\hat \Pi(\vec x)$, and define a   pair of operators labeled by $\vec k$ as 
	\be \label{phip} \hat \phi_{\vec k}=\frac{1}{(2\pi)^3} \int d^3x\, e^{-i\, \vec k\cdot \vec x}\; \hat \Phi(\vec x)\, ,  \hspace{1cm}  \hat \pi_{\vec k}= \int d^3x\, e^{-i\, \vec k\cdot \vec x}\; \hat \Pi(\vec x) \, . \ee
	
	The canonical commutation relations $[\hat \Phi(\vec x),\hat \Pi(\vec x')]=i\,  \delta(\vec x-\vec x')$ imply   $[\hat \phi_{\vec k},\hat \pi_{\vec k'}]=i\,  \delta(\vec k+\vec k')$. Hence,  for each wavenumber $ \vec k$ the operators  $\hat \phi_{\vec k}$ and  $\hat \pi_{-\vec k}$ form a canonically conjugated pair. However, these operators are not Hermitian and consequently do not describe observables. In particular we have $\hat \phi^{\dagger}_{\vec k}=\hat \phi_{-\vec k}$.  One way to bypass this impediment is by focusing on the real and imaginary parts of these operators (which are associated with the cosine and sine Fourier modes of the field): 
	\bea \label{hermitian} \hat \phi^{({\rm R})}_{\vec k}&\equiv& \frac{1}{\sqrt 2}\, (\hat \phi_{\vec k} +\hat \phi_{-\vec k} ) \, , \ \ \ \  \hat \phi^{({\rm I})}_{\vec k}\equiv  \frac{-i}{\sqrt 2}\, (\hat \phi_{\vec k} -\hat  \phi_{-\vec k} )\, , \nonumber \\ 
	\hat \pi^{({\rm R})}_{\vec k}&\equiv&  \frac{1}{\sqrt 2}\,(\hat \pi_{\vec k} +\hat  \pi_{-\vec k} )\, , \ \ \ \ \hat  \pi^{({\rm I})}_{\vec k}\equiv  \frac{-i}{\sqrt 2}\, (\hat \pi_{\vec k} -\hat  \pi_{-\vec k} )\, . \eea
	%
	These operators are Hermitian, and the only non-vanishing commutation relations between them are
	\be [\hat \phi^{({\rm R})}_{\vec k},\hat \pi^{({\rm R})}_{\vec k'}]=i\, \delta(\vec k+\vec k') \, , \ \ \ [\hat \phi^{({\rm I})}_{\vec k},\hat \pi^{({\rm I})}_{\vec k'}]=i\, \delta(\vec k+\vec k') \, , \ee
	and, consequently, for each $\vec k$ they define two canonical quadrature-pairs. However, because the  operators $\hat \phi^{({\rm R})}_{\vec k}$ and $\hat \pi^{({\rm R})}_{\vec k} $ are invariant under $\vec k \to - \vec k$, while  $\hat \phi^{({\rm I})}_{\vec k}$ and $\hat \pi^{({\rm I})}_{\vec k} $ change sign (as expected from cosine and sine modes), when working with these Hermitian fields one must restrict to half of the wavenumber space,
	\footnote{\label{Z+}More concretely, the Hermitian quadratures are defined for wavenumbers  $\vec k \in  k_0 \times \, \mathbb{R}^3_{(+)}$, where $k_0$ is a real number with dimensions of inverse length and 
		$\mathbb{R}^3_{(+)}\equiv\{(k_x,k_y,k_z)\in \mathbb{R}^3 \,  :\,  k_z>0\} \cup \{(k_x,k_y,k_z)\in \mathbb{R}^3  \,  :\,   k_z=0, k_y >0\} \cup \{(k_x,k_y,k_z)\in \mathbb{R}^3  \,  :\,  k_z=0, k_y =0, k_x>0\}$ \cite{Ashtekar:2009mb}. } since the other half does not describe independent degrees of freedom. (Of course the number of degrees of freedom remains the same, as simple counting reveals.) Therefore, for these Hermitian pairs of operators it does not make sense to talk about two-mode squeezing or entanglement between modes $\vec k$ and $-\vec k$.  One could instead discuss two-mode squeezing and entanglement between ($\hat \phi^{({\rm R})}_{\vec k},\hat \pi^{({\rm R})}_{\vec k} $) and ($\hat \phi^{({\rm I})}_{\vec k},\hat \pi^{({\rm I})}_{\vec k} $). But a simple calculation shows that for  Gaussian states  all  cross-correlations between these two pairs vanish at any time, due to the ``orthogonality'' of the cosine and sine modes. 
	Hence, there is no squeezing or entanglement generated between them in FLRW spacetimes, no matter what the expansion of the universe is. We therefore conclude that the Fourier components of the field are not suitable  variables  for our purposes.\\ 
	
	{\bf Strategy  2:} Use creation and annihilation operators and build Hermitian quadrature-pairs from them via 
	\be \hat X_{\vec k}\equiv\frac{1}{\sqrt{2}} \, (\hat A_{\vec k}+\hat A_{\vec k}^{\dagger})\, , \ \ \ \hat P_{\vec k}\equiv- \, \frac{i}{\sqrt{2}} \, (\hat A_{\vec k}-\hat A_{\vec k}^{\dagger})\, . \ee  (The operator  $\hat X_{\vec k}$ should not be confused with $\hat \phi_{\vec k}$, which is not Hermitian, and when expanded in terms of annihilation and creation takes a different form, namely $\hat{\phi}_{\vec{k}} = f_{k} \, \hat{A}_{\vec{k}} + \bar{f}_k \, \hat{A}^{\dagger}_{-\vec{k}}$, for appropriately normalized mode functions $f_k$. Even the dimensions of $ \hat X_{\vec k}$ and $\hat \phi_{\vec k}$ are different.)
	
	These Hermitian operators  are canonically conjugate, $[\hat X_{\vec k}, \hat P_{\vec k'}]=i\, \delta(\vec k+\vec k') $. They are defined for all $\vec k$, and $\hat X_{\vec k}$ is independent of $\hat X_{-\vec k}$ ---and not related to it by Hermitian conjugation (for the same reason that  $\hat A_{\vec k}$ is independent of $\hat A_{-\vec k}$).  Consequently, these operators allow us to define bi-partitions, squeezing and entanglement between modes $\vec k$ and $-\vec k$ in a mathematically well-defined manner. This is indeed  the strategy used in some of the previous literature (see e.g.\ 
	\cite{Martin:2015qta}). The drawback is that there is huge ambiguity in the definition of $\hat X_{\vec k}$ and $\hat P_{\vec k}$, precisely the well-known ambiguity associated with the definition of $\hat A_{\vec k}$ and $\hat A^{\dagger}_{\vec k}$ or, equivalently, in the notion of vacuum and particles in FLRW spacetimes.\footnote{For finite-dimensional systems, one can always define preferred quadratures at any chosen time $\eta_0$, even if the Hamiltonian is time dependent,  by using the instantaneous normal modes of the Hamiltonian. This defines annihilation operators $\hat A_{\vec k}$ at $\eta_0$,  whose associated vacuum is the instantaneous ground state of the Hamiltonian. This instantaneous diagonalization of the Hamiltonian, although it is a licit strategy for finite-dimensional systems, presents numerous problems in field theory, even in the simple case of FLRW spacetimes \cite{Fulling:1979ac}: in addition to the large ambiguity in the definition of a canonical Hamiltonian of a field theory in a time-dependent spacetime, the strategy fails to produce finite particle creation in the course of time and renormalizability of the energy-momentum tensor, for generic forms of the scale factor $a(\eta)$ and arbitrary values of $m$ and $\xi$.  These arguments are well-known \cite{Fulling:1979ac}, although often overlooked.} 
	One of the key goals of this paper is to emphasize this ambiguity in the definition of quadrature-pairs $\hat X_{\vec k}, \hat P_{\vec k}$ associated to each Fourier mode $\vec k$, and to argue that it translates to an ambiguity in the definition of squeezing and entanglement between the sectors $\vec k$ and $-\vec k$ of the field. 
	
	To better understand the impact of this ambiguity, it is illustrative to write explicitly how it affects the quadrature-pairs  $\hat X_{\vec k}, \hat P_{\vec k}$. The ambiguity in the definition of  $\hat A_{\vec k}$ and $\hat A^{\dagger}_{\vec k}$  reduces to 
	\be \label{bog}  \hat{ A}'_{\vec k}=\alpha_k\, \hat A_{\vec k}+\beta_k\, \hat A^{\dagger}_{-\vec k}\, , \ee
	where $\alpha_k$ and $\beta_k$ are complex numbers which depend only on the modulus of $\vec k$ and satisfy the normalization condition $|\alpha_k|^2-|\beta_k|^2=1$. Any set of such operators $\{\hat{ A}'_{\vec k}\}_{\vec k}$ defines a legitimate Fock vacuum that is invariant under translations and rotations, and no choice is preferred, except when the expansion of the universe is very special. The important aspect of this ambiguity is that {\em it mixes the $\vec k$ and $-\vec k$ sectors}. This can be seen  explicitly by writing the relation between the quadrature-pairs defined from $\hat A_{\vec k}$ and $\hat A'_{\vec k}$:
	\bea \label{mix} \hat{ X}'_{\vec k}&=&{\rm Re}[\alpha_{k}] \,\hat X_{\vec k}-{\rm Im}[{\alpha_{k}}] \,\hat  P_{ \vec k}+{\rm Re}[{ \beta_{k}}] \, \hat X_{ -\vec k} +{\rm Im}[{\beta_{k}}] \, \hat P_{ -\vec k}\, , \nonumber \\
	\hat {P}'_{ \vec k}&=&{\rm Re}[{ \alpha_{k}}] \, \hat P_{ \vec k}+{\rm Im}[{\alpha_{k}}] \, \hat X_{\vec k}-{\rm Re}[{\beta_{k}}] \, \hat P_{ -\vec k}+{\rm Im}[{ \beta_{k}}] \, \hat X_{-\vec k}\, .\eea
	%
	This expression shows that the ``primed'' quadrature-pair for $\vec k$ is a mix of the original quadrature-pairs for $\vec k$ {\em and} $-\vec k$. This implies that we can obtain any answer we want for the degree of two-mode squeezing and entanglement between $\vec k$ and $-\vec k$ by appropriately choosing the quadrature-pairs  $(\hat X'_{\vec k}, \hat P'_{\vec k})$. In particular, there always exists a choice for which any homogeneous and isotropic Gaussian state is manifestly unsqueezed and unentangled. 
	
	One could still think that, although it is true that there exists ambiguity at a given time, one can still unambiguously talk about the {\em generation} of squeezing and entanglement during evolution as follows: fix a choice of  quadrature-pairs once and for all, and compare their properties in the state before and after the evolution. If initially the state is unsqueezed and unentangled between the $\vec k$ and $-\vec k$  sectors and after the evolution this is no longer true, one can say that the evolution has squeezed and entangled these two sectors. This strategy is problematic in time-dependent spacetimes, since generically it involves the use of non-Hadamard states, as we argue below. Moreover, this strategy ignores another ambiguity: there is no reason to choose the same quadrature-pairs at the initial and final times, and in general one should not. This may seem counter-intuitive and unnatural at first, and we now provide two reasons to argue why this is indeed the case, one physical and one mathematical. 
	
	(i) From the physical viewpoint, because the properties of the spacetime change due to the expansion, the most natural choice of creation and annihilation variables ($\hat{A}_{\vec k},\hat{A}^{\dagger}_{\vec k}$)  and quadrature-pairs constructed from them also changes. To evaluate whether the final state is squeezed and entangled, it is natural to use the preferred notions at the final time (if such exist); this is analogous  to the example of the time-dependent oscillator discussed in section~\ref{dyn}. In fact, this is what is done in the most well-known examples of particle creation in curved spacetimes, namely the Hawking effect \cite{hawking_particle_1975} and Parker's particle creation in  FLRW spacetimes that are asymptotically Minkowskian in the past and future \cite{Parker:1968mv,Parker:1969au,Parker:1971pt}. In the Hawking effect, one chooses an initial state that is the vacuum relative to the preferred creation and annihilation variables in the asymptotic past, but probes the properties of the final state after the evolution using the natural choice  of creation and annihilation variables in the {\em future} asymptotic region. This is indeed the natural strategy from a physical perspective. In such a scenario, one finds that the final state contains particles, and the degrees of freedom escaping the black hole  are squeezed and entangled  with those falling into the horizon. Similarly, in Parker's asymptotically Minkowskian FLRW, one has preferred quadrature-pairs in the past and future, but they are different. One uses the preferred choice in the past to prepare the state and the preferred choice in the future to probe it after the evolution. One also finds that  there is particle creation for general expansion histories, which is accompanied by squeezing and entanglement between the $\vec k$ and $-\vec k$ sectors. 
	
	Hawking's and Parker's examples have permeated the intuition of many physicists, due to its simple interpretation in terms of particles. But this interpretation rests crucially on the assumption of asymptotically Minkowskian regions, which allow us to select  preferred choices of creation and annihilation variables in the past and future, respectively. In our universe, such regions are not available and one must face  the ambiguity in the choice of creation and annihilation variables. The examples of Hawking radiation and Parker's particle creation teach us that there is no reason to use the same notion of particles and choice of quadrature-pairs at initial and final times.
	
	(ii) Recall that we work in the Schr\"odinger picture. A choice of creation and annihilation variables at a time $t$ defines a vacuum state at that time, and the associated notion of particles and quadrature-pairs. In making such a choice, there are certain restrictions one must follow. In arbitrary spacetimes, it is accepted that  any permissible vacuum must be a Hadamard state \cite{Wald:1995yp}. This guarantees that the short distance behavior of the state has the appropriate physical and mathematical form, which in turn allows one to recover  results compatible with local Lorentz invariance at short distances, and  to renormalize the ultraviolet divergences that appear in calculations of the energy-momentum tensor and other composite operators. The Hadamard condition depends on the characteristics of the geometry at the time it is applied. Consequently, a Schr\"odinger state that is Hadamard at $t_1$ is in general not Hadamard at time $t_2$ if the spacetime is time dependent. Therefore,  if one uses the same choice for $\hat{A}_{\vec k}$ and $\hat{A}^{\dagger}_{\vec k}$, and the same notion of quadrature-pairs at all times, one is involving mathematical structures that violate the Hadamard condition. This is accompanied by some well-known issues; particularly, one would generically find that infinitely many particles are created per unit volume of space and that  the energy-momentum tensor is not renormalizable.  One could ignore this issue by arguing that these are ``ultraviolet problems'', which can be hidden by introducing a cut-off. The introduction of a cut-off, however, introduces other problems (breakdown of unitarity, no conservation of  energy and momentum, etc.). 
	These mathematical complications are peculiar to field theory in time-dependent spacetimes. They neither arise for fields in Minkowski spacetime nor for time-dependent finite-dimensional systems, and it is for this reason that they are sometimes overlooked.

	We reach the conclusion that in general FLRW spacetimes the question {\em``does time evolution produce two-mode squeezing and entanglement between  the  $\vec k$ and $-\vec k$ degrees of freedom?''}, is equivalent to the question {\em``does the evolution create particles?'', and they both suffer from the same ambiguity, namely the definition of particles at initial and final times.}

	Nevertheless, in special situations for which additional symmetries exist, one can take advantage of them to find a preferred choice. We will now discuss an example of direct relevance for this paper: a scalar field in the Poincar\'e patch of de Sitter spacetime. \\
	
	\subsection{Example: A  scalar field in the Poincar\'e patch of de Sitter spacetime\label{subsec:PdS}}
	Consider a spatially flat FLRW universe  with a scale factor of the form $a(\eta)=-\frac{1}{H\, \eta}$ with $\eta$ denoting the conformal time and $H$ a constant (in proper time, $a(t)=a_0\, e^{H\, t}$). We will refer to it as the Poincar\'e patch of de Sitter spacetime (PdS). This spacetime admits, in addition to the six isometries common to all FLRW geometries accounting  for homogeneity and isotropy, one extra Killing vector field associated with the de Sitter group.\footnote{The de Sitter group has ten independent Killing vectors fields, and all of them, locally, are isometries of PdS. However, the PdS  is only a portion of de Sitter spacetime, so not all ten transformations are global isometries of PdS. Only the subgroup of the de Sitter group which leaves the Poincar\'e patch invariant describes the global isometries of  PdS.  See  \cite[Sec.~IV~C]{abk1} and Appendix A for further details.}  
	Although this extra isometry is not globally time-like, it suffices to single out a preferred notion of vacuum at a given time (when complemented with the Hadamard condition); this is the so-called Bunch-Davies (BD) vacuum \cite{Chernikov:1968zm,Tagirov:1972vv,Bunch:1978yq} (see Appendix \ref{app:BD} for  details omitted in this section). In the usual terminology, the BD vacuum at time $\eta_0$ is defined from the initial data $(e^{\rm BD}_k(\eta_0), {\partial_{\eta}e_k^{\rm BD}}(\eta_0))$ of the Bunch-Davies solutions to the equations of motion, or mode functions:\footnote{More precise, the BD vacuum at time $\eta_0$  is the normalized state annihilated by the operators 
		\be \hat A_{\vec k}=i\, \big(e^{\rm BD\, *}_k(\eta_0)\, \hat \pi_{\vec k}-(2\pi)^3\, a(\eta_0)^2\, {\partial_{\eta}e^{\rm BD\, *}_k}(\eta_0)\, \hat \phi_{\vec k}\big)\, , \ee
		for all values of  $\vec k$.} 
	\be \label{BDmodes} e^{\rm BD}_k(\eta)= 
	\sqrt{\frac{-\pi \, \eta}{4 \, (2\pi)^3\, a(\eta)^2}}\, H^{(1)}_{\mu}(-k\eta)\, , \ee
	%
	where $H^{(1)}_{\mu}(x)$ is a Hankel function with index $\mu^2=\frac{9}{4}-\frac{m^2}{H^2}-12\xi$ (recall the BD state is ill-defined for $m=0$ and $\xi=0$ \cite{Allen:1985ux}). One important point to notice is that, in the Schr\"odinger picture, there is not a single BD vacuum, but rather {\em a BD vacuum at each instant of time}, which we will denote by $|{\rm BD},\eta\rangle$. The BD vacuum at $\eta_0$,  $|{\rm BD},\eta_0\rangle$, is defined from the  initial data $(e^{{\rm BD}}_k(\eta_0), {\partial_{\eta} e_k^{{\rm BD}}}(\eta_0))$, while $|{\rm BD},\eta_1\rangle$ is defined from  $(e^{{\rm BD}}_k(\eta_1), {\partial_{\eta} e_k^{{\rm BD}}}(\eta_1))$. Since these two sets of  initial data are different, these  are different states in the Schr\"odinger picture. 
	The state $|{\rm BD},\eta_0\rangle$ is invariant under the PdS isometries and Hadamard only at $\eta_0$, and it is the only state with such properties at $\eta_0$. Moreover, the  one-parameter family of states  $|{\rm BD},\eta\rangle$ are connected by time evolution: $|{\rm BD},\eta_1\rangle=\hat U_{\eta_1\eta_0}|{\rm BD},\eta_0\rangle$ (these statements are proven in Appendix \ref{app:BD}). These points go  unnoticed if one works in the Heisenberg picture, which is far more common in textbooks, since there one simply fixes the state  $|{\rm BD},\eta_0\rangle$ at $\eta_0$ once and for all,  and refers to it as {\em the} BD vacuum.  
	
	Therefore, the symmetries of PdS, when complemented with the Hadamard condition, provide a preferred choice of  quadratures {\em at each instant of time} $\eta$:
	\be \label{BDq} \hat X_{\vec k}^{(\eta)}=\frac{1}{\sqrt{2}} \, \Big(\hat A^{(\eta)}_{\vec k}+\hat A_{\vec k}^{(\eta)\, \dagger}\Big), \qquad \hat P^{(\eta)}_{\vec k}=-\, \frac{i}{\sqrt{2}} \, \Big(\hat A^{(\eta)}_{\vec k}-\hat A_{\vec k}^{(\eta)\, \dagger}\Big)\, , \ee
	%
	%
	where $\hat A_{\vec k}^{(\eta)}$ is defined from  $(e^{{\rm BD}}_k(\eta), \partial_{\eta}{e_k^{{\rm BD}}}(\eta))$, and hence annihilates the state $|{\rm BD},\eta\rangle$  for all wavenumbers $\vec k$. We emphasize that the label $\eta$ in these operators should not be interpreted as  Heisenberg evolution, since it is not---for the same reason that the quadrature $\hat X_{\rm out}$ in the example of the time-dependent harmonic oscillator in section \ref{dyn} is not the Heisenberg evolution of  $\hat X_{\rm in}$ and that the $out$ number operator $\hat N_{out}$ in the Hawking effect is not the time evolution of the $in$ number operator $\hat N_{in}$.

	Therefore, if one decides to use the symmetries of the PdS to resolve the ambiguity in the discussion of squeezing and entanglement, one must use the quadrature $\hat X_{\vec k}^{(\eta_0)},\hat P_{\vec k}^{(\eta_0)}$ at $\eta_0$, and the quadratures $\hat X_{\vec k}^{(\eta_1)},\hat P_{\vec k}^{(\eta_1)}$ at $\eta_1$. But as mentioned before, if the system is prepared at time $\eta_0$ in the state $|{\rm BD},\eta_0\rangle$, the evolution brings it to $|{\rm BD},\eta_1\rangle$. This  automatically implies that the final state contains no particles and there is no squeezing or entanglement between the sectors $\vec k$ and $-\vec k$ as defined  by the preferred quadratures at the final time. Thus, although we have used exactly the same strategy one uses in the Hawking effect, in PdS there is no particle creation. The difference is of course the high degree of symmetry  of PdS. 
	
	Based on this result, one should not conclude that the generation of the primordial density perturbations during inflation does not generate any feature genuinely quantum, such as entanglement. It simply tells us that there are no particles created for the notion of particles singled out by the symmetries of PdS, and correspondingly there is no entanglement given this notion of particles. Nevertheless, this does not necessarily imply that there is no entanglement between the degrees of freedom that we have access to in observations, namely the field in real space. We further elaborate on this in section \ref{sec:real}.
	

	\subsection{Connection with previous work}\label{sec:previous-work}
	
	The conclusions reached  in the example  above  contrast with previous discussions (see e.g. \cite{Albrecht:1992kf,Lesgourgues:1996jc,Martin:2015qta,Hsiang:2021kgh}). We explain here the origin of the differences. In short, in previous references there is an implicit choice of quadrature-pairs at early times, and another choice at late times. We argue in the following  that these choices are not natural in any sense, that they are not compatible with the (approximated) symmetries of slow-roll inflation, and that they also have some undesirable mathematical features. \\
	
	In discussing the generation of squeezing during inflation, it has been common to define quadratures-pairs using the following argument---although most times only in an implicit way. Recall first that  all spatially-flat FLRW metrics are conformally related to the Minkowski line element. This in turn implies the following (see e.g. \cite{waldGR}): if a scalar field $\Phi(\eta,\vec x)$ satisfies 
	\be \big[\Box-m^2-\xi\, R\big]\,  \Phi( \eta,\vec x)=0\, , \ \ {\rm with}\ \  \Box=g^{\mu\nu}\nabla_{\mu}\nabla_{\nu}\, , \ee then, given any nowhere vanishing  smooth function $\Omega(\eta,\vec x)$, the re-scaled field $ v(\eta,\vec x)=\Omega^{-1}\,  \Phi(\eta,\vec x)$ satisfies the equation
	\be \big[\tilde{\Box}- \Omega^{-2}\, m^2-(\xi-\frac{1}{6})\, \Omega^{-2}\, R\big]\,  v(\eta,\vec x)=0\, , \ \  {\rm where}\ \  \tilde{\Box}\equiv \tilde g^{\mu\nu}\tilde{\nabla}_{\mu}\tilde{\nabla}_{\nu}\, , \ee and  $ \tilde g_{\mu \nu}=\Omega^2\, g_{\mu\nu}$ is the conformally-rescaled metric. Hence, if we choose $\Omega=a(\eta)^{-1}$, the field $v(\eta,\vec x)\equiv a(\eta)\,  \Phi(\eta,\vec x)$ satisfies the Klein-Gordon equation in Minkowski spacetime, with a time-dependent potential: $(\partial^{\mu}\partial_{\mu} - V(\eta) )\, \hat v(\eta,\vec x)=0$, where the explicit form of the potential  is $V(\eta)=a^2(\eta)\, m^2+(\xi-\frac{1}{6}) \, 6\, \frac{a''}{a}$ (we have used that $R=6 \frac{a''}{a^3}$). In this way, the time dependence of the FLRW line element can be traded off by a time-dependent potential.\footnote{The introduction of the variable $v(\eta)$ is usually motivated by saying that  its equation of motion in Fourier space, when expressed in conformal time, does not contain terms proportional to the first time derivative of the field: $v_k''(\eta)+f(k,\eta) \, v_k(\eta)=0$, thereby simplifying to the equation of motion of a time-dependent harmonic oscillator. Note, however, that  $v(\eta)$ is not the only variable with this property (e.g.\ it is also true for $\chi(\vec x,\tau) \equiv a^3(\tau)\, \Phi(\vec x,\tau)$ when working with harmonic time $\tau$, defined as $dt=a^3\, d\tau$). What makes $v(\eta,\vec x)$ unique is the fact that it satisfies, in position space, the Klein-Gordon equation with respect to the Minkowski spacetime metric with a time-dependent potential.} One could then forget about the potential and define quadratures using the isometries of the Minkowski metric. This is done by defining annihilation  operators $\hat A^M_{\vec k}$ using initial data 
	\be \label{M} v_k(\eta_0)=\frac{1}{(2\pi)^{3/2}}\,\frac{1}{\sqrt{2 k}}\, , \ \ \ \ v'_k(\eta_0) =\frac{1}{(2\pi)^{3/2}}\,\frac{-i\, k }{\sqrt{2 k}}\,  \ee
	and defining quadrature-pairs from them:  $\hat x^{M}_{\vec k}=\frac{1}{\sqrt{2}} \, (\hat A^{M}_{\vec k}+\hat A_{\vec k}^{M\, \dagger})$, $\hat p^{M}_{\vec k}=-\, \frac{i}{\sqrt{2}} \, (\hat A^{M}_{\vec k}-\hat A_{\vec k}^{M\, \dagger})$. 
	The analysis of \cite{Albrecht:1992kf,Lesgourgues:1996jc,Martin:2015qta,Hsiang:2021kgh} shows that, if the system is prepared in the state annihilated by $\hat A^{M}_{\vec k}$ at some time $\eta_0$ during inflation ---hence the state has no squeezing or entanglement between the pairs  $\hat x^{M}_{\vec k},\hat p^{M}_{\vec k}$ and $\hat x^{M}_{-\vec k},\hat p^{M}_{-\vec k}$ at $\eta_0$--- the  inflationary evolution acts like a two-mode squeezer for these two quadrature-pairs, generating squeezing and entanglement. Although this is mathematically true, we make the following observations:
	
	(i) The physical meaning of  the variable $v(\eta,\vec x)$ is obscure,  since it is constructed by multiplying the scalar field $\Phi$ by the scale factor $a(\eta)$. Since the value of $a(\eta)$ can be re-scaled arbitrarily by a mere change of coordinates,  so can $v(\eta,\vec x)$. In particular, $v(\eta,\vec x)$ does not transform like a scalar field, or any other covariant quantity under diffeomorphisms. The variable $v(\eta,\vec x)$ has a clean physical meaning only in the special situation of a conformally coupled massless scalar field $(m=0,\, \xi=\frac{1}{6})$.
	
	(ii) The choice of quadrature-pairs based on \eqref{M} is motivated by  the symmetries of the auxiliary Minkowski metric and neglects the time-dependent potential $V(\eta)$. This amounts to 
	ignoring the actual time-dependence of the physical spacetime. 
	
	(iii) If this initial choice of quadrature-pairs (based on \eqref{M}) is translated to the physical field $\Phi$, it actually corresponds to a {\em time-dependent choice}  $\hat X^{(M,\eta)}_{\vec k}=a(\eta) \, \hat x^{M}_{\vec k}$ and $\hat P^{(M,\eta)}_{\vec k}=\frac{1}{a(\eta)}\, \hat p^{M}_{\vec k}$, because the relation between $\hat v$ and $\Phi$ is time-dependent. Thus, although one may have the impression of working with a fixed choice of quadrature-pairs when using \eqref{M}, from the viewpoint of the physical field $\Phi$ one is actually working with time-dependent quadrature-pairs.
	
	(iv) The vacuum state selected by the quadratures  \eqref{M}  is not Poincar\'e-de Sitter invariant.
	
	(iv) The choice \eqref{M} is not only not preferred, but it contains undesirable features: the vacuum state it selects is not a Hadamard state \cite{Wald:1995yp}. More specifically, it is a state of zeroth adiabatic order, as defined by Parker and Fulling \cite{Parker:1974qw,Parker:2009uva}. Consequently, for this state the energy-momentum tensor is non-renormalizable using local and covariant methods \cite{Wald:1995yp}. \\

	In spite of these features, one could argue that there is nothing actually incorrect in using the quadrature-pairs $\hat X^{(M,\eta)}_{\vec k}, \hat P^{(M,\eta)}_{\vec k}$ for the task of  quantifying the entanglement generated during inflation between $\vec k$ and $-\vec k$ modes, and we do not completely disagree with that. In other words, our goal here is not to identify any mistake in the previous literature, but rather to emphasize the huge ambiguity in speaking about squeezing and entanglement in Fourier space. Because of this ambiguity,  the answer one obtains tells us more about the concrete choice made to define quadrature-pairs than about the invariant and physical properties of the quantum state. Our conclusion is therefore that it would be more fruitful to leave aside Fourier space when speaking about generation of entanglement during inflation and rather focus attention on the degrees of freedom which we can actually observe, namely the field in {\em real space}. Centering the discussion on concrete physical observables would help to remove the ambiguities, and to quantify in an invariant manner the entanglement that could have accompanied the generation of the primordial density perturbations during inflation. We discuss this in section \ref{sec:real}.

	\section{Cosmological perturbations and inflation\label{sec:cosmpert}}
	
	The goal of this section is to extend the previous discussions to cosmological perturbations during inflation. We will focus on scalar perturbations for brevity, but all conclusions apply equally to tensor perturbations.  When compared to the scalar field in PdS spacetime, two additional subtleties appear: (i) the issue of gauge freedom in the definition of scalar perturbations, and (ii) the fact that an inflationary spacetime deviates from an exact PdS spacetime. The goal of this section is to check that these two subtleties do not modify the conclusions reached in previous sections. 
	
	Consider an inflationary spacetime, where the matter content is given by a scalar field  $\varphi$, the inflaton, subject to a potential $V(\varphi)$ compatible with slow-roll inflation. Let $\epsilon=-\frac{\dot H}{H^2}$ and $\delta=\frac{\ddot H}{2\dot H H}$ be the slow-roll parameters, which are assumed to be $\epsilon,\delta \ll 1$ during inflation. In particular, this implies that their time derivatives can be neglected, so they will be treated as constants; this is the so-called slow-roll approximation. This facilitates finding analytic expressions for the vacuum state.

	The most commonly used variable to describe scalar perturbations during inflation is the comoving curvature perturbation field  $\mathcal{R}(\eta,\vec x)$. It is related to the perturbations of the inflaton field $\delta \varphi$ and the Bardeen potential $\Psi$ through $\mathcal{R}=\Psi+\frac{H}{\dot \varphi}\, \varphi$ (see, e.g. \cite{Weinberg:2008zzc}). The equation of motion for $\mathcal{R}$  is obtained by expanding Einstein's equations to linear order in the perturbation; in Fourier space, it reads
	\be \label{eqR}  \mathcal{R}''_{\vec k}(\eta)+2\frac{z'}{z}\,  \mathcal{R}'_{\vec k}(\eta)+k^2\, \mathcal{R}_{\vec k}(\eta)=0\, ,\ee
	where $z(\eta)\equiv a\, \frac{\dot \varphi}{H}$. 
	The main advantages of the variable $\mathcal{R}$ are: (i) It is gauge invariant at first order in perturbations. (ii) It has a direct physical meaning: it describes the curvature of the $\eta=$ constant spatial sections, $R^{(3)}(\vec x)=\frac{4}{a^2}\, \nabla^2\mathcal{R}(\vec x)$, or in Fourier space $R^{(3)}_{\vec k}=-4\frac{k^2}{a^2}\, \mathcal{R}_{\vec k}$. (iii) It is time-independent for super-Hubble modes \cite{Weinberg:2008zzc}---this allows us to identify the value of the perturbations at the end of inflation with its value at horizon re-entry during the radiation or matter dominated eras. 
	
	The main difference with the previous section is that the exact de PdS invariance is lost, because $\epsilon,\delta \neq 0$, i.e.\ $H$ is no longer constant. Therefore, strictly speaking, the extra symmetry  which one uses to single out a preferred vacuum and quadrature-pairs for $\vec k$ and $-\vec k$, is not available. However, since $\epsilon,\delta \ll 1$, the common strategy is to use the approximate PdS invariance to extend the arguments of the previous section by replacing the Bunch-Davies modes of Eq.~\eqref{BDmodes} by their slow-roll generalization:\footnote{Recall  that comoving curvature perturbations cannot be defined for exact PdS spacetime, i.e., in the limit $\epsilon\to 0$. Also, even for non-zero $\epsilon$,  the state must be modified for $|\vec k|\to 0$ to avoid infrared divergences (see e.g. \cite{Agullo:2009zi}). This implies that, strictly speaking, there is no PdS invariant state for cosmological perturbations. However, since the deviations from PdS invariance occur for $|\vec k|\to 0$, they are not accessible with observations within our finite patch of the universe, and because of this it is normally argued that the state is PdS invariant for ``all practical purposes''. }
	\be \label{BDinlf} e^{{\rm BD}}_k(\eta)=\sqrt{\frac{-\pi \, \eta}{4 \, (2\pi)^3 \, z(\eta)^2}}\, H^{(1)}_{\mu}(-k\eta)\, , \ee
	where $\mu=\frac{3}{2}+2\epsilon +\delta$.  It is common to keep using the name  Bunch-Davies for these solutions and the vacua they define, and use them to define preferred quadratures associated with $\vec k$ and $-\vec k$ at each instant of time, as we did in the previous section. More explicitly, in the Schr\"odinger picture, (\ref{BDinlf}) defines a one-parameter family of vacua, which we will denote as $|{\rm BD},\eta\rangle$. One also has the associated family of annihilation operators $\hat A^{(\eta)}_{\vec k}$ and  quadrature-pairs $\hat X^{(\eta)}_{\vec k}=\frac{1}{\sqrt{2}}(\hat A^{(\eta)}_{\vec k}+\hat A^{(\eta)\, \dagger}_{\vec k})$ and  $\hat P^{(\eta)}_{\vec k}=\frac{-i}{\sqrt{2}}(\hat A^{(\eta)}_{\vec k}-\hat A^{(\eta)\, \dagger}_{\vec k})$, in terms of which the number operators read $\hat N^{(\eta)}_{\vec k}=\frac{1}{2} \big[( \hat X^{(\eta)}_{\vec k})^2+(\hat P^{(\eta)}_{\vec k})^2-\hat{\mathbb{I}}\big]$.  The  states in the family $|{\rm BD},\eta\rangle$ are related by  time evolution in Schr\"odinger's picture, $|{\rm BD},\eta_1\rangle=\hat U_{\eta_1,\eta_0}|{\rm BD},\eta_0\rangle$. Then, if perturbations are in the state $|{\rm BD},\eta_0\rangle$ at time $\eta_0$, for which $\langle \hat N^{(\eta_0)}_{\vec k}\rangle=0$ for all $\vec k$, at time $\eta_1>\eta_0$ the state evolves to $|{\rm BD},\eta_1\rangle$, for which $\langle \hat N^{(\eta_1)}_{\vec k}\rangle=0$ for all $\vec k$, and there is neither entanglement nor squeezing between the pairs $(\hat X^{(\eta_1)}_{\vec k},\hat P^{(\eta_1)}_{\vec k})$ and $(\hat X^{(\eta_1)}_{-\vec k},\hat P^{(\eta_1)}_{-\vec k})$. \\
	
	In summary, whether inflation squeezes and entangles scalar curvatures perturbations is an ambiguous question, and the answer depends on the choice of quadratures associated with the wavenumbers  $\vec k$ and $-\vec k$ at the initial and final times. If one takes advantage of the approximate de Sitter symmetries to define preferred vacua, particles and quadratures at each instant, there is neither particle creation nor generation of squeezing and entanglement. This result, however, does not make the state of perturbations at the end of inflation  less quantum in any invariant manner.

	\section{Correlations and entanglement in real space\label{sec:real}}
	
	Fourier space is a useful tool to compute many aspects of field theory in FLRW spacetimes. However, for the generation of squeezing and entanglement during inflation it is crucial to  pay attention to the observables we have access to because, as emphasized above,  squeezing and entanglement  are not properties of the quantum state alone.  
	We observe in real space, and whether there is entanglement between modes $\vec k$ and $-\vec k $ in Fourier space is not of direct physical relevance. Therefore, we are interested in whether inflation creates entanglement between the degrees of freedom of the primordial curvature perturbations associated with  different regions of space. Even in real space we need to formulate the question with some care, because: (i) Entanglement between spatially separated regions is ubiquitous in quantum field theory, even for the vacuum in Minkowski spacetime \cite{reeh1961bemerkungen,2018}. One possible avenue to isolate what inflation is adding, is to compare the entanglement at  the end of inflation with what one would find in Minkowski spacetime for the ``same two regions''. (ii) To discuss entanglement, we first need to define the two subsystems we are interested in. A field has infinitely many degrees of freedom and  we can only access a few in observations, out of which we want to define our two subsystems. Recall that a physical subsystem can be identified with a set of {\em pairs} of canonically conjugated observables---more precisely, with the subalgebra they define~\cite{haag_local_1996}. In the problem we are considering, we can obtain such pairs by ``averaging'' (or smearing) the field  and its conjugate momenta in a region of space $\mathfrak{R}$
	\be
	\hat  \Phi_\mathfrak{R} \equiv \int_\mathfrak{R} d^3x\, f(\vec x)\, \hat \Phi(\vec x)\, , \qquad 
	\hat \Pi_\mathfrak{R} \equiv \int_\mathfrak{R} d^3x\, g(\vec x)\, \hat \Pi(\vec x)\, , 
	\ee
	where $f(\vec x)$ and $g(\vec x)$ are two functions of compact support restricted  to the region of space $\mathfrak{R}$. This region can be thought of as the minimum resolution of our detectors, and the functions $f(\vec x)$ and $g(\vec x)$ are determined by the properties of the detector. The commutation relations of these two observables are 
	\be 
	[\hat \Phi_\mathfrak{R},\hat \Pi_\mathfrak{R}]= i \int d^3x \, f(\vec x)\,  g(\vec x) \, ,  
	\ee
	i.e., the overlap of the two smearing functions $f(\vec x)$ and $g(\vec x)$. Hence, if these functions are such that  $\int d^3x f(\vec x)\,  g(\vec x)=1$,  the operators $(\hat \Phi_\mathfrak{R},\hat \Pi_\mathfrak{R})$ form a canonical pair, and define a ``single-mode'' subsystem (classically this subsystem corresponds to a two-dimensional subspace of the phase space).\footnote{It is important to notice that we can extract (infinitely) many different canonical pairs from any finite region of space $\mathfrak{R}$. This can be done by choosing smearing functions $f'(\vec x)$ and $g'(\vec x)$ with compact support within $\mathfrak{R}$ and whose integrals against  $f(\vec x)$ and $g(\vec x)$ vanish. There are infinitely many such choices. This is in agreement with the well-known fact that any region of space $\mathfrak{R}$ hosts infinitely many degrees of freedom of the field. Each smearing extracts a single one of them.} Given two such pairs commuting among themselves, $(\hat \Phi_\mathfrak{R},\hat \Pi_\mathfrak{R})$ and $(\hat \Phi'_{\mathfrak{R}'},\hat \Pi'_{\mathfrak{R}'})$, and a quantum state $\hat \rho$, we can apply techniques from finite-dimensional systems and compute the entanglement between the two subsystems each pair defines. Obviously, we can also consider more complicated  subsystems, each made of an arbitrary  but finite number of independent ``modes''. This strategy can be used to evaluate the entanglement generated during inflation. For instance, one can use the Peres-Horodecki criterion \cite{PeresSeparabilityPhysRevLett,HORODECKI1997333,SimonSeparability2000} to the reduced quantum state describing  the two subsystems, which provides a sufficient condition for separability, and in certain situations can be used to define entanglement quantifiers \cite{serafini2017quantum}. Such a calculation goes beyond the scope of this paper. Nevertheless, we want to emphasize two important aspects of it:\\
	
	(1) Given a ``field observable'' $\hat  \Phi_\mathfrak{R}$, there is freedom in choosing a conjugate momentum. In other words, there are infinitely many operators $\hat \Pi_\mathfrak{R}$ satisfying $ [\hat \Phi_\mathfrak{R},\hat \Pi_\mathfrak{R}]=1$. This well-known fact has important consequences for entanglement, since each choice of  $\hat \Pi_\mathfrak{R}$  gives rise to a {\em different} physical subsystem, and therefore to {\em a different result for the entanglement} with the other subsystem. Hence, it is important to keep in mind that entanglement requires a choice of both field and momentum observables.  
	
	(2) No information about  the conjugate momenta of the primordial perturbations has been extracted from data so far. In fact, such observations are considered nonviable, at least for the accepted models of inflation  (see e.g.\ \cite{Maldacena:2015bha} for a discussion about this point), because information about the momenta must be extracted from the time derivatives of the field, which are exponentially small in the simplest inflationary models. The absence of information about  momentum observables $\hat \Pi_\mathfrak{R}$ precludes us from  making any statement related to entanglement (see Appendix \ref{app:qq} for further discussions).

	Current data inform us only about the statistics of field observables $\hat \Phi_\mathfrak{R}$ at a given time, and reveal that the observed correlations between the field at spatially separated regions of space are stronger than one would expect in the vacuum in Minkowski spacetime.
	The inflationary paradigm can account for these observed correlations, due to the fact that the primordial power spectrum $P_{\mathcal{R}}(k)\propto H^2$ remains almost scale invariant for super-horizon scales, in contrast to the decay $P_{\mathcal{R}}(k)\propto k^2$ one would find in Minkowski spacetime when $k\to 0$.
	But stronger correlations in the field  $\hat \Phi_\mathfrak{R}$ do not necessarily imply stronger entanglement. In fact,  these correlations alone cannot inform us about entanglement, since all the field  observables commute. As emphasized above, to speak about entanglement one needs to involve observables not commuting among themselves.  In the absence of momentum observables, there is no way to check whether the observed correlations come together with any entanglement. All observations so far can be accounted for by a classical theory, as previously emphasized in \cite{Martin:2015qta,ack,Maldacena:2015bha}.  In the absence of non-Gaussianity, all observations can be accounted for by a classical Gaussian stochastic state (i.e., a probability distribution in phase space) with appropriate mean and covariance, since the differences between such a classical state and a quantum Gaussian state are accessible only if non-commuting observables are measured. 
	This is true even in the idealized case considered here, in which we have ignored the effects of the ubiquitous decoherence processes that may have affected the primordial perturbations, e.g.\ due to their interaction with matter and radiation in the universe, as well as to potential self-interactions among different modes.

	\section{Discussion\label{sec:concl}}

	The first investigations of quantum field theory in curved spacetimes focused on FLRW  universes that become asymptotically Minkowskian in the past and future \cite{Parker:1968mv,Parker:1969au,Parker:1971pt}, and interpreted the effects that the expansion of the universe produces on quantum fields in terms of  particles created out of the vacuum. Soon after, it was understood that such a particle interpretation is not available in more realistic FLRW geometries, except in special circumstances. In general, the notion of particles is ambiguous at cosmological scales. The physical and invariant properties of the system are better encoded in field observables. This is indeed the strategy  followed in the context of inflation, where the state of the primordial perturbations at the end of the inflationary era is commonly characterized by the two-point correlation function---the power spectrum. Speaking about particles created during inflation introduces an unnecessary ambiguity, which masks the information that is relevant for observations. 
	
	In this paper, we argue that the question ``does the expansion of the universe squeeze and entangle perturbations with wavenumber $\vec k$ and $-\vec k$?'' suffers from exactly the same ambiguities as the calculation of the creation of particles. The reason is that in order to quantify squeezing and entanglement one needs to define observables associated with wavenumber $\vec k$ and $-\vec k$ and such a definition requires a choice of creation and annihilation operators. 
	More importantly, the ambiguity in the definition of   creation and annihilation operators {\em mixes degrees of freedom with wavenumber $\vec k$ and $-\vec k$} [see Eq.~\eqref{mix}], directly affecting the answer one is trying to find. In other words, there is no unambiguous separation of the degrees of freedom of the field in those associated with $\vec k$ and those with $-\vec k$, contrary to what one could intuitively think---except in special circumstances where additional symmetries are present. In fact, given any homogeneous and isotropic Gaussian state in an FLRW spacetime, there always exists a choice of creation and annihilation variables for which the state contains no particles and no squeezing or entanglement between the $\vec k$ and $-\vec k$ sectors. This is the case in inflation if one adapts the definition of particles to the approximate de Sitter symmetries of the spacetime. Other choices are of course possible ---although, in our opinion,  less desirable (see section~\ref{sec:previous-work}) --- and  lead to different conclusions. The absence of particle creation, generation of squeezing and entanglement for a given choice, however, does not make the state more or less quantum, since the physical properties of the state are insensitive to this ambiguity, as one would have expected. This situation is not very different from what happens in the well-known Unruh effect  \cite{Fulling:1972md,Davies:1974th,Unruh:1976db}: the Minkowski vacuum is made of maximally entangled pairs of particles, as defined by Rindler observers. Obviously, the way these observers perceive the Minkowski vacuum  does not make the state more or less quantum. The invariant information in the state is encoded in its two-point function $\langle 0|\Phi(t,\vec x)\Phi(t',\vec x')|0\rangle$ rather than in its particle content. 
	
	Observations in cosmology are performed in real space and  are independent of the ambiguities appearing in Fourier space, as we have discussed in section~\ref{sec:real}. Therefore, smeared fields in real space are the type of observables one should focus on to quantify the ``quantumness'' of the CMB, as has been recently emphasized e.g.~in \cite{Martin:2021qkg}. Observations of the CMB reveal that the primordial perturbations are correlated at spatial separations more strongly than they would be in the vacuum in Minkowski spacetime. Inflation can indeed account for these strong correlations. But stronger correlations in field observables do not necessarily imply stronger entanglement.  
	Reference \cite{Martin:2021qkg} has studied quantum discord as a way of measuring quantum correlations between space-like separated regions at the end of inflation. However, as  emphasized in \cite{Martin:2021qkg}, quantum discord does not have a clear physical meaning when applied to mixed states (which are unavoidably in this context due to the process of ``tracing out'' the degrees of freedom in other regions of space). It would be desirable to study quantifiers of entanglement with a more transparent physical interpretation in order to understand whether the generation of field correlations in inflation comes together with the production of quantum entanglement in real space. 
	
	Even if one restricts to observables in real space, the computation of entanglement faces an ambiguity: the choice of conjugate momentum of a field observable. This ambiguity is enormous in field theory, and it directly affects any calculation of entanglement. In practice, one could make a choice based on what can be actually observed, but unfortunately current CMB data inform us only about correlations among field observables at a given time, and not about any non-commuting observable which could play the role of a conjugate momentum. Therefore, current data is insufficient to determine whether there is any entanglement associated with the observed correlations. The question ``how much entanglement exists in this quantum state?'' is meaningless unless one has access to non-commuting observables.  Consequently, there is nothing genuinely quantum in observations made so far, in the sense that all observations can be  satisfactorily accounted for by a classical theory and a stochastic classical state (a probability distribution function in phase space), as previously emphasized in \cite{Martin:2015qta,ack}.  
	
	In this article, we  have focused on Gaussian states and ignored possible deviations from Gaussianity, motivated by the fact that observations to date have been unable to reveal  deviations from Gaussianity in the primordial perturbations. Gaussian states are easy to manipulate and to perform calculations with, but they are also limited in the content of genuinely quantum information one can extract from them. Nonetheless, it may be that the primordial perturbations are non-Gaussian. Observation of non-Gaussianities (see e.g.~\cite{Baumann:2021ykm} for a possible way using the large scale structure) could make the task of finding whether the origin of the primordial perturbations is quantum easier, as recently advocated in \cite{Green:2020whw}. 
	
	We have also ignored the effects of decoherence ---either produced by interactions of the primordial  perturbations with matter and radiation soon after the end of inflation, or from self-interactions--- in washing away quantum aspects, since decoherence would only aggravate the possibility of observing any quantum trace from the mechanism that generated the perturbations. 
	Finally, we have not considered potential deviations from the standard theory of quantum measurement, which could play a role in cosmology and account for the absence of the exact (rather than statistical) rotational symmetry of CMB anisotropies, as described in \cite{Sudarsky:2009za,Canate:2012nwv,Leon:2015hwa,Leon:2016ysi,Leon:2017yna,Bengochea:2020efe}. The effects of a potential spontaneous collapse process of the wave function could also add to the classicality of primordial perturbations, as discussed in \cite{Berjon:2020vdv}.

	\section*{Acknowledgments}
	We have benefited from  discussions with A. Ashtekar, A. Brady, A. Delhom, D. Kranas, J. Pullin, J. Olmedo, S. Suddhasattwa and M. Wilde. This work is supported by the NSF grant PHY-2110273, and from the Hearne Institute for Theoretical Physics.

	\appendix


	\section{One-mode squeezing in the Poincar\'e patch of de Sitter space \label{app:single-mode-sqz}}
	
	In addition to two-mode squeezing and entanglement between the degrees of freedom of a scalar field  associated with wavenumbers $\vec k$ and $-\vec k$ discussed in section \ref{sec:scalar}, the question of one-mode squeezing and its relation to the quantumness of cosmological perturbations has been also discussed in the literature \cite{ack}. In this section, we discuss the generation of single-mode squeezing in FLRW geometries in general, and inflationary spacetimes in particular, emphasizing  the ambiguities it involves. We investigate these questions for the two strategies introduced in  section \ref{sec:scalar} to define quadrature-pairs associated with a wavenumber $\vec k$.
	
	For the intermediate calculations in this section, it will be convenient to  ``put the universe in a box'' of coordinate volume $V_0$, with  $V_0$ finite but arbitrary large. This helps to avoid the mathematical inconvenience of having modes normalized to the Dirac delta distribution, and will make the arguments below more transparent. 
	Physical quantities do not depend on $V_0$, and we can send it to infinity at the end of the calculation. Mathematically, sending  $V_0\to \infty$ is equivalent to replacing $V_0$ by $(2\pi)^3$ in all calculations below.\\

	{\bf Strategy  1:}  Quadrature-pairs defined from the Fourier modes of the field.  As discussed above, from the (non-Hermitian) Fourier modes of the field
	\be \label{phip2} \phi_{\vec k}=\frac{1}{V_0} \int d^3x\, e^{-i\, \vec k\cdot \vec x}\, \hat \Phi(\vec x)\, ,  \hspace{1cm}  \pi_{\vec k}= \int d^3x\, e^{-i\, \vec k\cdot \vec x}\, \hat \Pi(\vec x) \, , \ee
	one can define two Hermitian quadrature-pairs 
	\bea \label{eq:hermitian} \hat \phi^{({\rm R})}_{\vec k}&\equiv&\sqrt{k\, V_0}\, a\, \frac{1}{\sqrt 2}\, (\hat \phi_{\vec k} +\hat \phi_{-\vec k} ) \, , \ \ \ \  \hat \phi^{({\rm I})}_{\vec k}\equiv \sqrt{k\, V_0}\, a\, \frac{-i}{\sqrt 2}\, (\hat \phi_{\vec k} -\hat  \phi_{-\vec k} )\, , \nonumber \\ 
	\hat \pi^{({\rm R})}_{\vec k}&\equiv&\frac{1}{\sqrt{k\, V_0}\, a}\,  \frac{1}{\sqrt 2}\,(\hat \pi_{\vec k} +\hat  \pi_{-\vec k} )\, , \ \ \ \ \hat  \pi^{({\rm I})}_{\vec k}\equiv \frac{-i}{\sqrt{k\, V_0}\, a}\,  \frac{1}{\sqrt 2}\, (\hat \pi_{\vec k} -\hat  \pi_{-\vec k} )\,  \eea
	for $\vec k \in  \frac{2\pi}{V_0^{1/3}} \, \mathbb{Z}^3_{(+)}$, where $\mathbb{Z}^3_{(+)}$ is defined in footnote \ref{Z+}. The pre-factors $\sqrt{k\, V_0}\, a$ and $\frac{1}{\sqrt{k\, V_0}\, a}$ in the fields and momenta were not introduced in section \ref{sec:scalar}, but they will be important in the following discussion. They are motivated by: (i) The two elements of each quadrature-pair need to have the same dimensions---square root of action---in order to  define squeezing (see section \ref{quadsqz}). (ii) Only coordinate-independent, physical quantities enter in their definition---e.g., $\sqrt{k\, V_0}\, a=\sqrt{\frac{k}{a}}\sqrt{V_0\, a^3}$, which depends on the physical wavenumber $k_{\rm ph}=\frac{k}{a}$ and the physical volume $a^3V_0$, and therefore do not change under mere re-scaling of the coordinates $\vec x \to \alpha\, \vec x$, with $\alpha$ a real number. Restricting to coordinate invariant observables will guarantee that our results below have direct physical meaning and cannot be attributed to coordinate artifacts. 
	
	The quadrature-pairs (\ref{eq:hermitian}) are defined using only structures present in the FLRW geometry and do not require any additional choice. In particular, they do not require a choice of creation and annihilation variables, neither at the initial nor at the final time. They are, therefore, free from extra ambiguities and can be used to define the generation of single-mode squeezing in an invariant manner.  
	
	Let us now focus on the Poincar\'e patch of de Sitter (PdS) spacetime, and assume the field is in the BD vacuum at time $\eta_0$ (see Appendix \ref{app:BD}) . The calculations below are more straightforward in the Heisenberg picture. We will denote by $|{\rm BD}\rangle$ the BD vacuum and by $\hat A_{\vec k}$ the operators annihilating them. It is then convenient to expand  $ \hat \phi_{\vec k}(\eta)$ in terms of $\hat A_{\vec k}$ and its Hermitian conjugate:
	%
	\be  \hat \phi_{\vec k}(\eta)=e^{{\rm BD}}_k(\eta)\, \hat A_{\vec k}+e^{{\rm BD}\, *}_k(\eta)\, \hat A^{\dagger}_{-\vec k}\,  \ee
	where the modes $e^{{\rm BD}}_k(\eta)$ are written in expression (\ref{BDmodes}). 
	From this,  we obtain a representation for the Hermitian fields
	\bea  \hat \phi^{(R)}_{\vec k}(\eta)&=&\sqrt{V_0 \, k}\, a(\eta)\,  \Big( e^{{\rm BD}}_k(\eta)\, \hat A^{(R)}_{\vec k}+e^{{\rm BD}\, *}_k(\eta)\, \hat A^{(R)\, \dagger}_{\vec k}\Big)\, , \\    \hat \phi^{(I)}_{\vec k}(\eta)&=&\sqrt{V_0 \, k}\, a(\eta)\,  \Big( e^{{\rm BD}}_k(\eta)\, \hat A^{(I)}_{\vec k}+e^{{\rm BD}\, *}_k(\eta)\, \hat A^{(I)\, \dagger}_{\vec k}\Big) \, ,\eea
	where we have defined $\hat A^{(R)}_{\vec k}\equiv \frac{1}{\sqrt{2}}\, ( \hat A_{\vec k}+ \hat A_{-\vec k})$ and $\hat A^{(I)}_{\vec k}\equiv \frac{-i}{\sqrt{2}}\, ( \hat A_{\vec k}- \hat A_{-\vec k})$. One can easily check that $[\hat A^{(R)}_{\vec k},\hat A^{(R)\, \dagger}_{\vec k'}]=\delta_{\vec k,\vec k'}=[\hat A^{(I)}_{\vec k},\hat A^{(I)\, \dagger}_{\vec k'}]$ and $[\hat A^{(R)}_{\vec k},\hat A^{(I)}_{\vec k'}]=0=[\hat A^{(R)}_{\vec k},\hat A^{(I)\, \dagger}_{\vec k'}]$. 
	From this, it is straightforward to obtain that the cross-correlations between  $\hat \phi^{(R)}_{\vec k}$ and $ \hat \phi^{(I)}_{\vec k}$ vanish at any time: $\langle  {\rm BD}|  \hat \phi^{(R)}_{\vec k}(\eta)\, \hat \phi^{(I)}_{\vec k}(\eta)|{\rm BD} \rangle =0$. Furthermore, the variances of $\hat \phi^{({\rm R})}_{\vec k}$ and $\hat \phi^{({\rm I})}_{\vec k}$ are equal to each other and can be written in terms of the non-Hermitian fields $\hat \phi_{\vec k}$ 
	\be  (\Delta  \hat \phi^{({\rm R})}_{\vec k})^2= (\Delta  \hat \phi^{({\rm I})}_{\vec k})^2=V_0\, k\, a^2\, \langle \hat \phi_{\vec k}\hat \phi_{-\vec k}\rangle\,  .\ee
	This justifies why it is standard to focus attention on $\langle \hat \phi_{\vec k}\hat \phi_{-\vec k}\rangle$, even though $ \hat \phi_{\vec k}$ is not an observable, since the correlations among the Hermitian fields can all easily obtained from it. 
	The same argument applies to the momenta: 
	\be   (\Delta  \hat \pi^{({\rm R})}_{\vec k})^2= (\Delta  \hat \pi^{({\rm I})}_{\vec k})^2=\frac{1}{V_0\, k\, a^2}\, \langle \hat \pi_{\vec k}\hat \pi_{-\vec k}\rangle\,  .\ee
	Let us use the  Bunch-Davies form for the modes of a massless, minimally coupled field, $m=0=\xi$, as this case is often discussed in the literature because it is closer to the study of cosmological perturbations: 
	\be e^{{\rm BD}}_k(\eta)=\frac{1}{a(\eta)\sqrt{V_0}}\,\frac{1}{\sqrt{2 k}}\, \left(1-i\, \frac{1}{\eta\, k}\right)\,  e^{-i\, k\, \eta}\, . \ee
	Then,  we obtain 
	\be \label{PS} \langle \hat \phi_{\vec k}\hat \phi_{-\vec k}\rangle=\frac{1}{V_0} \frac{1}{a(\eta)^2\, 2\, k}\, \left(1+\frac{H^2}{k_{\rm ph}^2(\eta)}\right)\, ,\ee
	so that 
	\bea  (\Delta  \hat \phi^{({\rm R})}_{\vec k})^2&=& (\Delta  \hat \phi^{({\rm I})}_{\vec k})^2=\frac{1}{2}\,  \left(1+\frac{H^2}{k_{\rm ph}^2(\eta)}\right)\nonumber \\
	(\Delta  \hat \pi^{({\rm R})}_{\vec k})^2&=& (\Delta  \hat \pi^{({\rm I})}_{\vec k})^2=\frac{1}{2}. \eea
	Note that: (i) These variances only depend on physical quantities, namely the Hubble rate $H$ and the physical wavenumber $k_{\rm ph}=\frac{k}{a}$. In particular, they do not depend on the volume $V_0$, which can be sent to infinity if desired.\footnote{Independently of the value of $V_0$, this expression blows up  for the zero mode. This corresponds to the well-known infrared divergences of the Bunch-Davies vacuum for massless fields. In cosmology, one assumes that the state is modified for very infrared scales in a way that makes it infrared finite---at the expense of breaking the PdS invariance. Since the modification is irrelevant for observations, because it only involves extreme infrared scales, there is no need to specify it.} (ii) In the limit $H\to 0$ one recovers the result expected for Minkowski spacetime and the Minkowski vacuum, namely all four variances are equal to $\frac{1}{2}$. (iii) While the variances of the momenta are time-independent, the variances of the fields {\em grow in time}. Therefore, the evolution does not squeeze these two quadrature-pairs. 
	
	One could be tempted to focus instead on the non-Hermitian field $\hat \phi_{\vec k}$ and argue that, because \eqref{PS} decreases in time due to the presence of the pre-factor $a^{-2}$,  single-mode squeezing occurs during inflation. We do not support this argument because  $\hat \phi_{\vec k}$ is not an observable (since it is a non-Hermitian operator) and, as we just showed, a natural way of interpreting \eqref{PS} in terms of Hermitian quadratures shows that the later do not get squeezed. 

	We conclude that there is no reason to support that single-mode squeezing happens during inflation for the Fourier modes associated with wavenumber $\vec k$, as long as one restricts attention to physical quantities. \\

	{\bf Strategy  2:} A second option is to define quadratures, as we did in the discussion of two-mode squeezing, from annihilation and creation variables
	\be \hat X_{\vec k}=\frac{1}{\sqrt{2}} \, (\hat A_{\vec k}+\hat A_{\vec k}^{\dagger})\, , \qquad \hat P_{\vec k}=-\, \frac{i}{\sqrt{2}} \, (\hat A_{\vec k}-\hat A_{\vec k}^{\dagger})\, . \ee
	This definition, however, requires a choice for $\hat A_{\vec k}$ and $\hat A_{\vec k}^{\dagger}$, and consequently the result depends on this choice. As before, in  PdS   one can use the spacetime symmetries to single out the one-parameter family of quadrature-pairs written in \eqref{BDq}. The discussion of single-mode squeezing becomes identical to the discussion of two-mode squeezing in section \ref{sec:scalar}: if the system is prepared in the BD vacuum at time $\eta_0$ (in Schr\"odinger's picture), it will evolve to the BD vacuum at time $\eta_1>\eta_0$. The final state will look physically identical to an observer at $\eta_1$ as the initial state did for an observer at $\eta_0$. Hence, there is no generation of single-mode squeezing during inflation if one uses the spacetime symmetries of PdS to remove the ambiguities.

	\section{Relation between squeezing, entanglement, and quantumness of Gaussian states\label{app:qq}}
	
	Several ways to identify and quantify the ``quantumness'' of Gaussian states are frequently described in textbooks (see, e.g., \cite{serafini2017quantum}). Squeezing and entanglement are two such examples, but there are also others like the $P$-function. In the main body of this paper, we showed that squeezing and entanglement are not intrinsic properties of the state, since they depend on  a  choice of quadratures. This also applies to the alternative methods of describing the ``quantumness'' of Gaussian states. We will elaborate upon this here, first, for the simplified setting of quantum mechanics and next, for quantum field theory on time-dependent  spacetimes. 
	
	\subsection{P-function and entanglement in quantum mechanics}
	Perhaps the most commonly used ``quantumness'' criterion is the so-called $P$-function, from which one  characterizes a Gaussian state (or any state in general) as quantum if its $P$-function takes negative values \cite{gerry2005introductory}. For a time-independent simple oscillator, squeezed states are examples of states with negative $P$-functions, while vacuum and coherent states in general are not. Regarding separability, a sufficient criterion to detect its presence is by paying attention to the (ordinary) eigenvalues of the covariance matrix: if 
	all the ordinary eigenvalues of $\sigma$ are equal or larger than one, then a Gaussian state is separable across any bipartition of the quadrature-pairs. Do these widely used definitions provide an invariant way of characterizing quantumness and separability? The answer is no, since a close inspection reveals that neither of these properties of the $P$-function and the ordinary eigenvalues of $\sigma$ are  symplectic invariant. On the one hand, the $P$-function rests on a choice of annihilation and creation variables (or equivalently, on a choice of quadrature-pairs). This is clear from its definition, and also from the fact that its characteristic function is a generating function for {\em normal-ordered products}. Since normal-order requires a choice  of annihilation and creation variables, the $P$-function also does. Hence, the $P$-function measures if a state is squeezed {\em relative} to the vacuum singled out by the choice of annihilation operators made to define it. Similarly, regarding separability, as already mentioned in footnote \ref{fn1}, the ordinary eigenvalues of $\sigma$ are not symplectic invariant, because not all symplectic transformations are orthogonal, and hence the eigenvalues only provide information about the basis of quadratures one is using to write $\sigma$. 
	
	\subsection{Wigner function in quantum mechanics}
	The Wigner function $\rho_W(x_i,p_i)$ {\em does} provide a symplectic covariant way of describing the properties of a Gaussian state, since its definition does not require any additional structure or choice. As is well-known (see e.g. \cite{gerry2005introductory}), the Wigner function of every Gaussian state (pure or mixed) is a Gaussian probability density function (PDF) in the classical phase space, with mean and variance given by the first and second moments of the quantum state, $\vec \mu$ and $\sigma$. This PDF has three important properties: (i) It is positive across the entire phase space. (ii) For quadratic Hamiltonians, its time evolution satisfies Liouville's classical equation of motion $\frac{d}{dt} \rho_W=-\{\rho_W,H\}$, where the curly brackets represent Poisson brackets. (This is a consequence of the fact that,  for quadratic Hamiltonians, the classical evolution completely determines the quantum dynamics, as discussed in section \ref{dyn}). Hence, the Wigner function is a bona fide ``classical mixed state'', i.e., a stochastic classical state. These properties are symplectic invariant and true for {\em all Gaussian states}, pure or mixed. (iii) The classical average of any polynomial in $x_i$ and $p_i$ with respect to the Wigner function  $\rho_W(x_i,p_i)$ agrees {\em exactly} with the quantum expectation value of the symmetrically ordered version of the polynomial. Hence, if we restrict to symmetrically ordered functions of $\hat x_i$ and $\hat p_i$, Gaussian states and quadratic Hamiltonians, we can completely forget about the quantum formalism, and obtain all physical predictions by working on a classical theory with a stochastic classical state  $\rho_W(x_i,p_i)$ --- as  already emphasized in the context of inflation in \cite{Martin:2015qta,ack}. In this precise and invariant sense, all Gaussian states are ``the most classical  states in the quantum theory''. The Wigner function does not distinguish between ground, squeezed, thermal, coherent or other types of Gaussian states. 
	
	\subsection{Squeezing, entanglement and quantumness in field theory}
	For field theory in time-dependent spacetimes,  a preferred set of quadratures is not available in general. In fact, one can always find quadrature-pairs containing arbitrary amounts of squeezing or entanglement for all Gaussian states. One is therefore forced to put all Gaussian states on equal footing  and the labels  ``squeezed'' and ``entangled''  become  mere conventions, based on a choice of quadrature-pairs. All Gaussian states are equally quantum (or equivalently, equally classical). In special circumstances when physically preferred quadrature-pairs exist, one commonly links the degree of quantumness of states to them---recall that a preferred quadrature-pair is equivalent to having a preferred set of annihilation and creation variables and vacuum state. This is what we do, for instance, in Minkowski spacetime, for which the Minkowski vacuum and coherent states built from it are considered the most classical states of the theory. Similarly,  one uses the name ``squeezed states'' for states presenting squeezing or entanglement for the preferred quadratures-pairs and regards them as highly quantum states. But in time-dependent spacetimes, where one does not have preferred quadrature-pairs nor a preferred vacuum state, the difference between coherent and squeezed states dilutes. {\em All Gaussian states are on equal footing regarding their quantumness or classicality.} 
	
	
	Note that one can use the Wigner function to evaluate the quantumness of any state in a symplectic-covariant way, since the definition  of the Wigner function does not require a choice of quadrature-pairs or basis in the classical phase space (in contrast to, for instance, the $P$-function). As for finite-dimensional systems, the Wigner function is positive-definite across the classical phase space for all Gaussian states, and it satisfies the classical equations of motion for quadratic Hamiltonians. Moreover, like for finite-dimensional systems, the Wigner function puts all Gaussian states on equal footing regarding their quantumness or classicality.
	
	
	\section{The Bunch-Davies vacuum in Schr\"odinger's picture \label{app:BD}}
	
	This section provides some details omitted in subsection \ref{subsec:PdS} about the definition of the so-called Bunch-Davies vacuum in the Poincar\'e patch of de Sitter spacetime (PdS). In particular, we discuss how this state is defined in Schr\"odinger's picture, and prove some results which do not commonly arise if one works in Heisenberg's picture and which we explicitly used in subsection \ref{subsec:PdS}.

	\subsection{How to define a Fock-vacuum in Schr\"odinger's picture}
	
	In quantum field theory on curved spacetimes, there are infinitely many states which can play the role of the Fock vacuum. They are commonly referred to as quasi-free states, and can be characterized as pure Gaussian states with zero mean (i.e.\ $\langle \hat \Phi(\vec x)\rangle =0=\langle \hat \Pi(\vec x)\rangle$). These states are fully characterized by their covariance matrix $\sigma(\vec x,\vec x')$ specified at a given instant. Therefore, there are as many Fock vacua as inequivalent covariant matrices describing pure Gaussian states. In practice, one can select a Fock vacuum by choosing a basis in the complexified classical phase space, as follows. 
	
	Let $\Gamma$ be the classical phase space  of the field theory, and let $\Gamma_{\mathbb{C}}$ be its complexification---obtained by taking all possible linear combinations of elements of  $\Gamma$ with complex coefficients. The classical symplectic structure allows us to define a ``product'' in $\Gamma_{\mathbb{C}}$: Given two elements in  $\Gamma_{\mathbb{C}}$,   $\gamma(\vec x)=(e(\vec x), p(\vec x))$ and $\tilde \gamma(\vec x)=(\tilde e(\vec x), \tilde p(\vec x))$, their product is\footnote{As in Appendix \ref{app:single-mode-sqz}, we will work with a universe of finite volume $V_0$, and send $V_0\to \infty$ only at the end of the calculations. This avoids the mathematical awkwardness of working with mode functions normalized to the Dirac delta distribution.} 
	\be \langle \gamma,\tilde \gamma\rangle=i\, \int \frac{d^3x}{V_0} \, ( e(\vec x)^*\, \tilde p(\vec x)- p(\vec x)^*\, \tilde e(\vec x))\, . \ee
	This product satisfies all the properties of a Hermitian  inner product in  $\Gamma_{\mathbb{C}}$, except one---it is not positive definite. Yet one can always decompose $\Gamma_{\mathbb{C}}$ into a direct sum of a subspace $\Gamma_{+}$ on which the product is positive definite, and its orthogonal complement $\Gamma_{-}$ (which happens to be the subspace complex conjugated to $\Gamma_{+}$,  and on which the product is guaranteed to be negative definite).  The important statement is that such splitting of $\Gamma_{\mathbb{C}}$ in subspaces of positive and negative norm vectors, $\Gamma_{\mathbb{C}}=\Gamma_{+} \oplus \Gamma_{-}$,  is equivalent to a choice of Fock vacuum. The relation is as follows: given an orthonormal basis in  $\Gamma_{+}$, $\{\gamma_{\alpha}(\vec x)\}_{\alpha=1}^{\infty}$, the covariance matrix of the Fock vacuum is 
	\be \sigma^{ij}(\vec x,\vec x')=\sum_{\alpha} \gamma^i_{\alpha}(\vec x)\gamma^{j\, *}_{\alpha}(\vec x')+\gamma^j_{\alpha}(\vec x)\gamma^{i\, *}_{\alpha}(\vec x')\, , \ee
	where $i,j=1,2$ label the field- and conjugate momentum-components of phase space elements $\gamma^i_{\alpha}$.\footnote{Because we have not smeared out the fields, $\sigma^{ij}(\vec x,\vec x')$ must be understood in the distributional sense.} It is straightforward to show that this covariant matrix does not depend on the concrete basis we use within $\Gamma_{+}$. Hence, there is a one-to-one correspondence  between positive norm subspaces  $\Gamma_+$ and covariance matrices of pure Gaussian states. 
	
	Obviously, since there are infinitely many different splittings $\Gamma_{\mathbb{C}}=\Gamma_{+} \oplus \Gamma_{-}$, there is a huge ambiguity in the definition of a Fock vacuum. If additional symmetries are present, one can use these to narrow down the ambiguity. Below, we will study the way the symmetries of PdS  affect this ambiguity. At the technical level, we will do this by studying the way a positive-norm subspace  $\Gamma_+$ changes under the transformations within the PdS symmetry group, by applying the transformations to a basis. The Fock vacuum is  invariant under a group of transformations if $\Gamma_+$ is left invariant. 
	
	A more direct, although equivalent way of understanding the symmetries of the vacuum is by looking at the symmetries of the covariant matrix itself. However, in practice it is more convenient to work directly with a basis in $\Gamma_+$ to understand the restriction the symmetries impose on the choice of vacuum. We follow the latter route in this section.

	\subsection{The isometries of PdS}
	
	The de Sitter group in four spacetime dimensions has ten independent Killing vectors fields, and all of them, locally, are isometries of PdS. But since  the PdS  is only a portion of de Sitter space, not all these transformations are global isometries of PdS. Only the subgroup of the de Sitter group which leaves the Poincar\'e patch invariant describes the global isometries of  PdS (see  \cite[Sec.~IV~C]{abk1}). The global isometries of the PdS form a seven-dimensional group, generated by three spatial translations and three rotations (these are common to all spatially flat FLRW spacetimes), and one additional isometry defined by the Killing vector field
	\be K^{\mu}=-H\, \eta\, \partial^{\mu}_{\eta}-H\, x\, \partial^{\mu}_{x}-H\, y\, \partial^{\mu}_{y}-H\, z\, \partial^{\mu}_{z}\, . \ee
	The orbits of this vector field  passing through an arbitrary point $(\eta_0,x_0,y_0,z_0)$ are  the curves $c^{\mu}(\lambda)=e^{-H\, \lambda}\, (\eta_0,x_0,y_0,z_0)$. We see that these orbits combine a translation forward in time $\eta_0\to e^{-H\, \lambda}\,  \eta_0$ (in cosmic time this reads $t_0\to t_0+\lambda)$ with a spatial contraction $\vec x_0\to e^{-H\, \lambda}\, \vec x_0$. These transformations leave the metric invariant, since this spatial contraction  exactly compensates for the cosmic expansion occurring  during the time  translation $t_0\to t_0+\lambda$.

	\subsection{Bunch-Davies vacuum at instant \texorpdfstring{$\eta_0$}{eta\_0}}
	
	It will be convenient to consider the following elements of $\Gamma_{\mathbb{C}}$:
	\be \label{BDmodesM} \gamma^{\rm  BD}_{\vec k}(\eta_0, \vec x)\equiv  \begin{pmatrix} e^{\rm BD}_{\vec k}(\eta_0, \vec x)\\ p^{\rm BD}_{\vec k}(\eta_0, \vec x) \end{pmatrix}=  \begin{pmatrix} \sqrt{\frac{-\pi\, \eta_0}{V_0\, 4\, a^2(\eta_0)}}\, H^{(1)}_{\mu}(-k\, \eta_0)\, e^{i\, \vec k \cdot \vec x} \\ V_0\, a^2(\eta_0)\frac{d}{d\eta}|_{\eta_0} e^{\rm BD}_{\vec k}(\eta, \vec x)   \end{pmatrix}\ee
	where  $ H^{(1)}_{\mu}(-k\, \eta_0) $ is a Hankel function with index  $\mu^2=\frac{9}{4}-\frac{m^2}{H^2}-12 \, \xi$.  One can easily check that these modes are orthonormal, $\langle  \gamma^{\rm  BD}_{\vec k}, \gamma^{\rm  BD}_{\vec k'}\rangle =\delta_{\vec k,\vec k'}$. The set $\{ \gamma^{\rm  BD}_{\vec k}(\eta_0, \vec x)\}$, for all $\vec k$, defines a vacuum, and we will show in the following that it is the only vacuum state that is both PdS invariant and Hadamard at $\eta_0$.
	
	Consider   a more general family of modes, defined as 
	\be \label{basis} \gamma_{\vec k}= \alpha_{\vec k}\, \gamma^{\rm  BD}_{\vec k}(\eta_0, \vec x)+\beta_{\vec k}\, \gamma^{{\rm  BD}\, *}_{-\vec k}(\eta_0, \vec x)\, , \ee
	with $\alpha_{\vec k}$ and $\beta_{\vec k}$ arbitrary complex numbers satisfying $|\alpha_{\vec k}|^2-|\beta_{\vec k}|^2=1$. Given a choice for  $\alpha_{\vec k}$ and $\beta_{\vec k}$ for all values of $\vec k$, the set  $\{ \gamma_{\vec k}(\vec x)\}_{\vec k}$ defines a Fock vacuum with the following properties:
	
	\begin{theorem}
		The vacua defined by any of the sets $\{ \gamma_{\vec k}(\vec x)\}_{\vec k}$ are all invariant under translations. 
		
	\end{theorem}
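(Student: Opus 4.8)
The plan is to invoke the criterion established earlier in this appendix: a Fock vacuum is invariant under a symmetry transformation precisely when the associated positive-norm subspace $\Gamma_+ \subset \Gamma_{\mathbb{C}}$ is left invariant. Since $\Gamma_+$ is spanned by the modes $\{\gamma_{\vec k}\}_{\vec k}$, it suffices to show that a spatial translation maps each $\gamma_{\vec k}$ back into $\Gamma_+$. In fact I would show the stronger statement that a translation sends each $\gamma_{\vec k}$ to a scalar multiple of itself, so that the one-dimensional span of every individual mode is preserved and hence their sum $\Gamma_+$ is trivially invariant.

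First I would record how a spatial translation by $\vec a$ acts on phase-space elements: taking the active convention, it sends $(e(\vec x), p(\vec x))$ to $(e(\vec x-\vec a), p(\vec x - \vec a))$, i.e.\ it is the pullback by the translation isometry, and it never mixes different wavenumbers. The decisive observation is that both constituents of $\gamma_{\vec k}$ carry the \emph{same} spatial profile. The Bunch-Davies element $\gamma^{\rm BD}_{\vec k}$ of Eq.~\eqref{BDmodesM} has field and momentum components both proportional to $e^{i\vec k\cdot\vec x}$, since the time derivative defining $p^{\rm BD}_{\vec k}$ does not touch the spatial dependence. Its partner $\gamma^{{\rm BD}\,*}_{-\vec k}$ has spatial profile $\overline{e^{i(-\vec k)\cdot\vec x}}=e^{i\vec k\cdot\vec x}$, which coincides with that of $\gamma^{\rm BD}_{\vec k}$.

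Next I would compute the action of the translation on this common profile: $e^{i\vec k\cdot\vec x}\mapsto e^{i\vec k\cdot(\vec x-\vec a)}=e^{-i\vec k\cdot\vec a}\,e^{i\vec k\cdot\vec x}$. Consequently both $\gamma^{\rm BD}_{\vec k}$ and $\gamma^{{\rm BD}\,*}_{-\vec k}$ acquire the identical phase $e^{-i\vec k\cdot\vec a}$, and therefore so does their linear combination, $\gamma_{\vec k}\mapsto e^{-i\vec k\cdot\vec a}\,\gamma_{\vec k}$. I would then emphasize that this holds for \emph{any} choice of Bogoliubov coefficients $\alpha_{\vec k},\beta_{\vec k}$ satisfying $|\alpha_{\vec k}|^2-|\beta_{\vec k}|^2=1$ (even direction-dependent ones), precisely because the translation acts diagonally mode by mode. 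Thus every $\gamma_{\vec k}$ is an eigenvector of the translation, $\Gamma_+$ is invariant, and the associated vacuum is translation-invariant.

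The one point deserving care — and the nearest thing to an obstacle — is the phase alignment in the step above: it is essential that $\gamma^{\rm BD}_{\vec k}$ is paired with the complex conjugate of the $-\vec k$ mode rather than the $+\vec k$ mode. Only because conjugation flips $e^{-i\vec k\cdot\vec x}$ back to $e^{i\vec k\cdot\vec x}$ do the two terms transform with a common phase; had $\gamma_{\vec k}$ combined genuinely distinct spatial profiles, the translation would not act diagonally and $\Gamma_+$ could fail to be preserved. The only remaining routine check is that the momentum component inherits the same profile as the field component, so that the phase is common to the whole phase-space element and not merely to its field part.
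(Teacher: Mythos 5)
Your proof is correct and follows essentially the same route as the paper's: both arguments observe that the entire $\vec x$-dependence of $\gamma_{\vec k}$ is the single profile $e^{i\vec k\cdot\vec x}$, so a translation multiplies each basis mode by a phase and leaves $\Gamma_+ = {\rm span}\{\gamma_{\vec k}\}_{\vec k}$ invariant. The only difference is that you spell out the detail the paper's one-line proof leaves implicit, namely that the partner $\gamma^{{\rm BD}\,*}_{-\vec k}$ (and the momentum component) shares the same spatial profile as $\gamma^{\rm BD}_{\vec k}$, which is a worthwhile clarification but not a different argument.
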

	
	\begin{proof}
		The proof is rather trivial; because  the $\vec x$-dependence in $ \gamma_{\vec k}(\vec x)$ is of the  simple form $e^{i\vec k\cdot \vec x}$, a translation $\vec x\to \vec x+\vec \lambda$ changes 
		$\gamma_{\vec k}(\vec x)\to e^{i\vec k\cdot \vec \lambda}\,   \gamma_{\vec k}(\vec x)$, and obviously these phases leave the vector space  $\Gamma_+={\rm span}\{ \gamma_{\vec k}(\vec x)\}_{\vec k}$ invariant. 
	\end{proof}

	\begin{theorem}
		The vacua defined by any of the sets $\{ \gamma_{\vec k}(\vec x)\}_{\vec k}$ are invariant under rotations if and only if $\alpha_{\vec k}=\alpha_k$ and $\beta_{\vec k}=\beta_k$ for all $\vec k$, that is, if these coefficients do not depend on the direction of $\vec k$.   
	\end{theorem}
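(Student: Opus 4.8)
The plan is to work directly with the positive-norm subspace $\Gamma_+={\rm span}\{\gamma_{\vec k}(\vec x)\}_{\vec k}$, since, as established above, the vacuum is invariant under a transformation precisely when that transformation leaves $\Gamma_+$ invariant. The one computational input I would record first is how a rotation $R\in SO(3)$ acts on the Bunch-Davies modes. Because the only $\vec x$- and $\vec k$-dependence of $\gamma^{\rm BD}_{\vec k}(\eta_0,\vec x)$ enters through the plane wave $e^{i\vec k\cdot\vec x}$ and through $k=|\vec k|$ inside the Hankel function, the pullback by $R$ (sending $\vec x\to R^{-1}\vec x$) acts simply as $\gamma^{\rm BD}_{\vec k}\mapsto\gamma^{\rm BD}_{R\vec k}$, with no extra phase; since the pullback is a real operation on $\vec x$ it commutes with complex conjugation, so also $\gamma^{{\rm BD}\,*}_{-\vec k}\mapsto\gamma^{{\rm BD}\,*}_{-R\vec k}$. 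I denote this induced action on $\Gamma_{\mathbb C}$ by $U_R$. This step is routine.

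The easy direction then follows at once: if $\alpha_{\vec k}=\alpha_k$ and $\beta_{\vec k}=\beta_k$ depend only on $k$, applying $U_R$ to $\gamma_{\vec k}=\alpha_k\,\gamma^{\rm BD}_{\vec k}+\beta_k\,\gamma^{{\rm BD}\,*}_{-\vec k}$ gives $\alpha_k\,\gamma^{\rm BD}_{R\vec k}+\beta_k\,\gamma^{{\rm BD}\,*}_{-R\vec k}=\gamma_{R\vec k}$, using $\alpha_{R\vec k}=\alpha_k$ and $\beta_{R\vec k}=\beta_k$. Hence $U_R$ merely permutes the spanning set of $\Gamma_+$, so $\Gamma_+$ is invariant and the vacuum is rotationally invariant.

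For the converse I would assume $\Gamma_+$ is invariant, so that $U_R\gamma_{\vec k}=\alpha_{\vec k}\,\gamma^{\rm BD}_{R\vec k}+\beta_{\vec k}\,\gamma^{{\rm BD}\,*}_{-R\vec k}$ lies in $\Gamma_+$ and can be written as $\sum_{\vec q}d_{\vec q}\,\gamma_{\vec q}$. Expanding both sides in the basis $\{\gamma^{\rm BD}_{\vec q}\}\cup\{\gamma^{{\rm BD}\,*}_{\vec q}\}$ of $\Gamma_{\mathbb C}$ (a basis since positive- and negative-norm modes are orthogonal) and matching the coefficient of each $\gamma^{\rm BD}_{\vec p}$ yields $d_{\vec p}\,\alpha_{\vec p}=\alpha_{\vec k}\,\delta_{\vec p,R\vec k}$. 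The crucial observation is that $|\alpha_{\vec p}|^2=1+|\beta_{\vec p}|^2\ge 1$, so $\alpha_{\vec p}$ never vanishes; hence $d_{\vec p}=0$ for every $\vec p\neq R\vec k$, the sum collapses to $U_R\gamma_{\vec k}=d_{R\vec k}\,\gamma_{R\vec k}$, and matching the remaining coefficients gives $\alpha_{\vec k}=d_{R\vec k}\,\alpha_{R\vec k}$ and $\beta_{\vec k}=d_{R\vec k}\,\beta_{R\vec k}$. Taking moduli and imposing $|\alpha|^2-|\beta|^2=1$ on both sides forces $|d_{R\vec k}|=1$, so the ratio $\beta_{\vec k}/\alpha_{\vec k}=\beta_{R\vec k}/\alpha_{R\vec k}$ and the moduli $|\alpha_{\vec k}|,|\beta_{\vec k}|$ are all rotation-invariant, i.e.\ functions of $k$ alone.

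The final step, which I expect to need the most care, is upgrading this gauge-invariant statement to the literal claim $\alpha_{\vec k}=\alpha_k,\ \beta_{\vec k}=\beta_k$. The coefficients are only defined up to a common $\vec k$-dependent phase, because $\gamma_{\vec k}\to e^{i\theta_{\vec k}}\gamma_{\vec k}$ rescales both $\alpha_{\vec k}$ and $\beta_{\vec k}$ by $e^{i\theta_{\vec k}}$ while leaving $\Gamma_+$, hence the covariance matrix and the vacuum, unchanged. I would exploit this freedom by choosing $\theta_{\vec k}=-\arg\alpha_{\vec k}$, making $\alpha_{\vec k}$ real and positive; since $|\alpha_{\vec k}|$ already depends only on $k$, the gauge-fixed $\alpha_{\vec k}=\alpha_k$, and then $\beta_{\vec k}=\alpha_k\,(\beta/\alpha)_{\vec k}$ likewise depends only on $k$. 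This closes the equivalence in the stated form. The genuinely subtle point is precisely this disentangling of the physical content (direction-independence of the ratio $\beta_{\vec k}/\alpha_{\vec k}$) from the pure-phase redundancy; the algebraic matching itself is immediate once the nonvanishing of $\alpha_{\vec k}$ is noticed.
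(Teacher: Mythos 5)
Your proposal is correct, and its skeleton is the same as the paper's: both directions are handled by working with $\Gamma_+={\rm span}\{\gamma_{\vec k}\}_{\vec k}$, noting that a rotation simply permutes the Bunch--Davies modes, expanding the rotated $\gamma_{\vec k}$ in the orthonormal set $\{\gamma^{\rm BD}_{\vec q},\gamma^{{\rm BD}*}_{\vec q}\}$, and collapsing the sum to a single term. Where you genuinely depart from the paper is the endgame of the converse. The paper, from $\alpha_{\vec k}=\lambda_{R^{\top}\vec k}\,\alpha_{R^{\top}\vec k}$ and $\beta_{\vec k}=\lambda_{R^{\top}\vec k}\,\beta_{R^{\top}\vec k}$, asserts directly that $\lambda_{\vec k}=1$ and hence $\alpha_{\vec k}=\alpha_k$, $\beta_{\vec k}=\beta_k$; this is too quick, since the expansion coefficient depends on the rotation $R$, and in fact nontrivial unimodular solutions exist --- re-phasing $\gamma_{\vec k}\to e^{i\theta_{\vec k}}\gamma_{\vec k}$ with direction-dependent $\theta_{\vec k}$ leaves $\Gamma_+$, and hence the vacuum, untouched while making the coefficients direction dependent. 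You instead extract only what the hypothesis forces, namely $|d_{R\vec k}|=1$ together with rotation invariance of $|\alpha_{\vec k}|$, $|\beta_{\vec k}|$ and of the ratio $\beta_{\vec k}/\alpha_{\vec k}$ (using $|\alpha_{\vec k}|\geq 1\neq 0$, the same nonvanishing fact the paper needs implicitly), and then obtain the literal statement by fixing the phase gauge $\arg\alpha_{\vec k}=0$. This buys a precise version of the ``only if'' direction: it holds for a canonical choice of the coefficients within the phase-redundancy class, which is the honest content of the theorem; the paper's shorter argument obscures exactly this point.
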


	\begin{proof}
		Under a rotation $R$, $\gamma_{\vec k}(\vec x)\to \gamma_{\vec k}(R\cdot \vec x)=  \alpha_{\vec k}\, \gamma^{\rm  BD}_{R^{\top}\cdot  \vec k}(\eta_0, \vec x)+\beta_{\vec k}\, \gamma^{{\rm  BD}\, *}_{-R^{\top}\cdot  \vec k}(\eta_0, \vec x) $.
		
		Then, if $\alpha_{\vec k}=\alpha_k$ and $\beta_{\vec k}=\beta_k$ the transformed modes are equal to $\gamma_{R^{\top}\cdot \vec k}(\vec x)$, and consequently the  vector space  $\Gamma_+={\rm span}\{ \gamma_{\vec k}(\vec x)\}_{\vec k}$ is left invariant. 
		
		Conversely, if the vector space  $\Gamma_+={\rm span}\{ \gamma_{\vec k}(\vec x)\}_{\vec k}$  is invariant under rotations, then $ \alpha_{\vec k}\, \gamma^{\rm  BD}_{R^{\top}\cdot  \vec k}(\eta_0, \vec x)+\beta_{\vec k}\, \gamma^{{\rm  BD}\, *}_{-R^{\top}\cdot  \vec k}(\eta_0, \vec x)$ must belong to it, for any rotation $R$ and for all $\vec k$. This implies that there must exist some complex coefficient $\lambda_{\vec k \vec k'}$ satisfying 
		\be  \label{albet} \alpha_{\vec k}\, \gamma^{\rm  BD}_{R^{\top}\cdot  \vec k}(\eta_0, \vec x)+\beta_{\vec k}\, \gamma^{{\rm  BD}\, *}_{-R^{\top}\cdot  \vec k}(\eta_0, \vec x)=\sum_{\vec k'}\lambda_{\vec k \vec k'}\, \gamma_{\vec k'}(\vec x)\, . \ee
		Using (\ref{basis})  and the orthonormality of $\gamma^{\rm  BD}_{\vec k}$, we deduce that $\lambda_{\vec k \vec k'}$ must be of the form $\lambda_{\vec k \vec k'}=\lambda_{\vec k'}\, \delta_{\vec k',\,  R^{\top}\cdot \vec k}$. With this, equation (\ref{albet}) further implies that $\alpha_{\vec k}=\lambda_{R^{\top}\cdot  \vec k}\ \alpha_{R^{\top}\cdot  \vec k}$ and $\beta_{\vec k}=\lambda_{R^{\top}\cdot  \vec k}\, \beta_{R^{\top}\cdot  \vec k}$ for all rotatins $R$ and for all $\vec k$, which in turn implies $ \lambda_{ \vec k}=1$, $\alpha_{\vec k}=\alpha_{k}$ and $\beta_{\vec k}=\beta_{k}$ for all $\vec k$.
	\end{proof}

	What about invariance under the orbits of $K^{\mu}$? These transformations combine a time translation $\eta_0\to e^{-H\, \lambda}\,  \eta_0$ and a contraction $\vec x\to e^{-H\, \lambda}\, \vec x$, hence:
	\be 
	\gamma_{\vec k}(\vec x) \to \alpha_{ k}\, \gamma^{\rm  BD}_{\vec k}(e^{-H\, \lambda}\, \eta_0, e^{-H\, \lambda}\, \vec x)+\beta_{ k}\, \gamma^{{\rm  BD}\, *}_{-\vec k}(e^{-H\, \lambda}\, \eta_0, e^{-H\, \lambda}\, \vec x)\, . 
	\ee
	Using the analytical form of $\gamma^{\rm  BD}_{\vec k}$ given in (\ref{BDmodesM}), and the fact that the scale factor is  $a(\eta)=-1/(H\eta)$  in PdS spacetimes, with $H$ a constant, we have that $\gamma^{\rm  BD}_{\vec k}(e^{-H\, \lambda}\, \eta_0, e^{-H\, \lambda}\, \vec x)= \gamma^{\rm  BD}_{\vec k'}(\eta_0, \vec x)$, with $\vec k'=e^{-H\, \lambda}\, \vec k$. In other words, the effect of a $K^{\mu}$-transformation is simply to change $\vec k\to\vec k'= e^{-H\, \lambda}\, \vec k$. 
	With this, we find that along the orbits of $K^{\mu}$, $ \gamma_{\vec k}(\vec x)$ transforms as  
	\be  \gamma_{\vec k}(\vec x) \to \alpha_{k}\, \gamma^{\rm  BD}_{\vec k'}(\eta_0, \vec x)+\beta_{k}\, \gamma^{{\rm  BD}\, *}_{- \vec k'}(\eta_0, \vec x) \, , \ee
	with $\vec k'=e^{-H\, \lambda}\,\vec k$. The transformed mode belongs to\footnote{Because the transformations generated by $K^{\mu}$ change the volume $V_0$, the conventions regarding $V_0$ in equations \eqref{phip2} and (\ref{BDmodesM}) must be appropriately chosen, in such a way that Hamilton's equation remain invariant under the transformation. If this is not the case, one cannot blindly compare the phase space elements $\gamma(\vec x)$ before and after the transformation, since they represent initial data for different equations. We have made a choice of factors $V_0$ which makes the comparison meaningful. This remark would be unnecessary  had we decided to work in the covariant phase space, although other aspects of our discussion would be more obscure.} /$\Gamma_+={\rm span}\{ \gamma_{\vec k}(\vec x)\}_{\vec k}$ if and only if $\alpha_k= \alpha_{k'}$ and  $\beta_k= \beta_{k'}$ for all $\lambda$ and all $k$. This implies that $\alpha_k$ and $\beta_k$ must   be independent of $k$. 
	
	Hence, the family of vacua defined from the set $\{\gamma_{\vec  k}\}_{\vec k}$, with $ \gamma_{\vec  k}=\alpha\,   \gamma^{\rm  BD}_{\vec k}(\eta_0, \vec x)+\beta\,   \gamma^{\rm  BD\, *}_{\vec k}(\eta_0, \vec x)$, and $\alpha$ and $\beta$ $k$-independent complex numbers satisfying $|\alpha|^2-|\beta|^2=1$, is invariant under the symmetries of the PdS spacetime. This is the family of the so-called $\alpha$-vacua. 
	
	On the other hand, the Hadamard condition imposes that the  modes $\gamma_{\vec  k}$ defining the vacuum must approach positive frequency modes $e^{-i k\, \eta}$ in the limit $k\to \infty$. This imposes an additional condition on the coefficients, namely that $\beta \to 0$ and $|\alpha|\to 1$ as $k\to \infty$.  This implies that, among all $\alpha$-vacua at instant $\eta_0$, there is only one which is a Hadamard state, namely the one for which $\beta =0$, or in other words, the Fock vacuum defined from the modes $\gamma^{\rm BD}_{\vec k}(\eta_0, \vec x)$ themselves. 
	This is the so-called Bunch-Davies vacuum at $\eta_0$ \cite{Chernikov:1968zm,Tagirov:1972vv,Bunch:1978yq}, which we denote as $|{\rm BD},\eta_0\rangle$. 
	
	In the cosmology literature it is  common to use the name ``Bunch-Davies vacuum'' in a different way, namely to refer to any state that looks like the Minkowski vacuum at short distances. We emphasize that such condition is already captured by the Hadamard condition in a mathematically precise manner, and it does not single out a unique state. It is better to reserve the name Bunch-Davies vacuum for the unique state that is PdS invariant and Hadamard, as originally investigated in  \cite{Chernikov:1968zm,Tagirov:1972vv,Bunch:1978yq}.

	\subsection{Comparing Bunch-Davies vacua at different times in Schr\"odinger's picture}
	
	In the previous section, we reached the conclusion that  $|{\rm BD},\eta_0\rangle$ is the only Hadamard state invariant under the PdS isometries. It is defined from the positive-norm subspace $\Gamma_+={\rm span}\{\gamma^{\rm BD}_{\vec k}(\eta_0, \vec x)\}_{\vec k}$. If we were to repeat the construction at a different time, $\eta_1$,  we would proceed similarly but using instead the modes $\gamma^{\rm BD}_{\vec k}(\eta_1, \vec x)$. Since these  are different phase space elements than  $\gamma^{\rm BD}_{\vec k}(\eta_0, \vec x)$, they potentially define a different state, which we will denote as $|{\rm BD},\eta_1\rangle$, and which is guaranteed, by construction, to be Hadamard and de PdS invariant at $\eta_1$.  This raises the following questions:
	\begin{enumerate}
		\item Are $|{\rm BD},\eta_1\rangle$ and $|{\rm BD},\eta_0\rangle$ indeed different states? (We will show the answer is in the affirmative.)
		
		\item If they are different, and since $|{\rm BD},\eta_0\rangle$ is the only Hadamard and PdS invariant state at $\eta_0$, $|{\rm BD},\eta_1\rangle$ cannot satisfy both these two properties at $\eta_0$. We will show below that $|{\rm BD},\eta_1\rangle$ is in fact {\em neither} Hadamard nor PdS invariant at instant $\eta_0$. 
		
		\item We will also show that the states $|{\rm BD},\eta_0\rangle$ and $|{\rm BD},\eta_1\rangle$ are connected by time evolution:  $\hat U_{\eta_1,\eta_0}|{\rm BD},\eta_0\rangle=|{\rm BD},\eta_1\rangle$. 
	\end{enumerate}
	Therefore,  we will reach the conclusion that in  Schr\"odinger's picture there exits a one-parameter family of states, $|{\rm BD},\eta\rangle$, each of which is Hadamard and PdS invariant only at time $\eta$, and which are related to each other by time evolution. In Heisenberg's picture, we simply pick one representative in this family of states, and call it {\em the} Bunch-Davies vacuum. 

	In the remaining of this section we prove these statements. 
	\begin{proof}
		To show that $|{\rm BD},\eta_1\rangle$ and $|{\rm BD},\eta_0\rangle$ are different states, we simply need to prove that the phase space elements $\gamma^{\rm BD}_{\vec k}(\eta_1, \vec x)$ do not belong to ${\rm span}\{\gamma^{\rm BD}_{\vec k}(\eta_0, \vec x)\}_{\vec k}$. This can be done by writing $\gamma^{\rm BD}_{\vec k}(\eta_1, \vec x)$ as 
		\be \gamma^{\rm BD}_{\vec k}(\eta_1, \vec x)=\alpha_k\, \gamma^{\rm BD}_{\vec k}(\eta_0, \vec x)+\beta_k\, \gamma^{\rm BD}_{-\vec k}(\eta_0, \vec x)\, ,\ee
		and showing that the coefficients $\beta_k$ are different from zero. The expression for $\beta_k$ are lengthy and not particularly interesting, and we do not write them explicitly. The important information is  that these coefficients are {\em different form zero and $k$-dependent.} This implies that $|{\rm BD},\eta_1\rangle$ is not PdS invariant at $\eta_0$, since we proved above that for all PdS invariant states $\beta_k$ should be $k$-independent.  Furthermore, we find that $\beta_k$ approaches a constant value when $k\to \infty$; this value is different from zero whenever $\eta_1\neq \eta_0$, which shows that  $|{\rm BD},\eta_1\rangle$ is not a Hadamard state at $\eta_0$.
		To prove that $\hat U_{\eta_1,\eta_0}|{\rm BD},\eta_0\rangle=|{\rm BD},\eta_1\rangle$ it suffices to notice that the classical time evolution from $\eta_0$ to $\eta_1$ brings the phase space element $\gamma^{\rm BD}_{\vec k}(\eta_0, \vec x)$ to $\gamma^{\rm BD}_{\vec k}(\eta_1, \vec x)$ ---because the expression in terms of Hankel functions written in  (\ref{BDmodesM}) are exact solutions to the equations of motion. This implies that the evolution of ${\rm span}\{\gamma^{\rm BD}_{\vec k}(\eta_0, \vec x)\}_{\vec k}$ from $\eta_0$ to $\eta_1$  produces ${\rm span}\{\gamma^{\rm BD}_{\vec k}(\eta_1, \vec x)\}_{\vec k}$, and consequently that the Fock vacuum  state $|{\rm BD},\eta_0\rangle$ evolves to $|{\rm BD},\eta_1\rangle$. 
	\end{proof}
	
	That $|{\rm BD},\eta_1\rangle$ is not a Hadamard state at $\eta_0$ is not surprising, since the Hadamard condition involves the form of the spacetime geometry, and the metric tensor of PdS spacetimes changes in time. So, if $|{\rm BD},\eta_1\rangle$ is Hadamard at $\eta_1$, it cannot be at $\eta_0$. But, how to understand that $|{\rm BD},\eta_1\rangle$ is PdS invariant at $\eta_1$ but not at $\eta_0$? The reason is that the Killing vector field $K^{\mu}$ is time dependent. As a result, $K^{\mu}$ does not generate the same transformations at $\eta_0$ and $\eta_1$. In other words, $K^{\mu}$ does not define a unique transformation in the phase space of our field theory, but rather a two-parameter family of  transformations, parameterized by the initial and final times, $\eta_0$ and $\eta_0+\Delta \eta$ (this is equivalent to saying that these transformations are generated by a time-dependent ``Hamiltonian''). The state $|{\rm BD},\eta_1\rangle$ is designed to be invariant under the $K$-flow starting at $\eta_1$, and this makes it not invariant  under the $K$-flow starting at $\eta_0$.

	\bibliography{Refs}

\end{document}